\documentclass[12 pt]{amsart}
\usepackage{amssymb,latexsym,amsmath,amscd,amsthm,graphicx, color}
\usepackage[all]{xy}
\usepackage{pgf,tikz}
\usepackage{mathrsfs}
\usepackage{cite}
\usetikzlibrary{arrows}
\definecolor{uuuuuu}{rgb}{0.26666666666666666,0.26666666666666666,0.26666666666666666}
\definecolor{xdxdff}{rgb}{0.49019607843137253,0.49019607843137253,1.}
\definecolor{ffqqqq}{rgb}{1.,0.,0.}

\raggedbottom
\pagestyle{empty}

\definecolor{uuuuuu}{rgb}{0.26666666666666666,0.26666666666666666,0.26666666666666666}
\definecolor{qqwuqq}{rgb}{0.,0.39215686274509803,0.}
\definecolor{zzttqq}{rgb}{0.6,0.2,0.}
\definecolor{xdxdff}{rgb}{0.49019607843137253,0.49019607843137253,1.}
\definecolor{qqqqff}{rgb}{0.,0.,1.}
\definecolor{cqcqcq}{rgb}{0.7529411764705882,0.7529411764705882,0.7529411764705882}

%

\setlength{\oddsidemargin}{0 in} \setlength{\evensidemargin}{0 in}
\setlength{\textwidth}{6.75 in} \setlength{\topmargin}{-.6 in}
\setlength{\headheight}{.00 in} \setlength{\headsep}{.3 in }
\setlength{\textheight}{10 in} \setlength{\footskip}{0 in}

\theoremstyle{plain}

\newtheorem{theorem}[subsection]{Theorem}

\newtheorem{lemma}[subsection]{Lemma}

\newtheorem{prop}[subsection]{Proposition}

\theoremstyle{definition}
\newtheorem{remark}[subsection]{Remark}

\newtheorem{note}[subsection]{Note}


\newcommand{\uu}{\cup}
\newcommand{\ii}{\cap}

\newcommand{\sci}{\subset}
\newcommand{\es}{\emptyset}
\newcommand{\set}[1]{\{#1\}}


\newcommand{\ga}{\alpha}
\newcommand{\gb}{\beta}

\renewcommand{\gg}{\gamma}

\newcommand{\gl}{\lambda}

\newcommand{\gs}{\sigma}
\newcommand{\gt}{\tau}

\newcommand{\tit}{\textit}

\newcommand{\C}[1]{\mathcal{#1}}
\newcommand{\D}[1]{\mathbb{#1}}

\newcommand{\te}{\text}


\begin{document}

To appear, Journal of Interdisciplinary Mathematics
\title{Optimal quantization for nonuniform Cantor distributions}

\author{Lakshmi Roychowdhury}

\address{School of Mathematical and Statistical Sciences\\
University of Texas Rio Grande Valley\\
 1201 West University Drive\\
Edinburg, TX 78539, USA.}
\email{lakshmiroychowdhury82@gmail.com}

\thanks{This work was done in partial fulfillment of the author's Master's thesis at Texas A\&M University under the direction
of Professor S.N. Lahiri.}
\subjclass[2010]{60Exx, 28A80, 94A34.}
\keywords{Optimal quantizers, quantization error, probability distribution, Cantor set}

\date{}
\maketitle

\pagestyle{myheadings}\markboth{Lakshmi Roychowdhury}{Optimal quantization for nonuniform Cantor distributions}

\begin{abstract}  Let $P$ be a Borel probability measure on $\mathbb R$ such that $P=\frac 1 4 P\circ S_1^{-1} +\frac 3 4 P\circ S_2^{-1}$, where $S_1$ and $S_2$ are two similarity mappings on $\mathbb R$ such that $S_1(x)=\frac 1 4 x $ and $S_2(x)=\frac 1 2 x +\frac 12$ for all $x\in \mathbb R$. Such a probability measure $P$ has support the Cantor set generated by $S_1$ and $S_2$. For this  probability measure, in this paper, we give an induction formula to determine the optimal sets of $n$-means and the $n$th quantization errors for all $n\geq 2$.  We have shown that the same induction formula also works for the Cantor distribution $P:=\psi^2 P\circ S_1^{-1} +\psi^4 P\circ S_2^{-1}$ supported by the Cantor set generated by $S_1(x)=\frac 13x$ and $S_2(x)=\frac 13 x+\frac 23$ for all $x\in \mathbb R$, where $\psi$ is the square root of the Golden ratio $\frac 12(\sqrt 5-1)$. In addition, we give a counter example to show that the induction formula does not work for all Cantor distributions. Using the induction formula we obtain some results and observations which are also given in this paper.
\end{abstract}

\section{Introduction}

Quantization of continuous random signals (or random variables and processes) is an
important part of digital representation of analog signals for various coding techniques
(e.g., source coding, data compression, archiving, restoration). The oldest example of quantization in statistics is rounding off. Sheppard (see \cite{S}) was the first who analyzed rounding off for estimating densities by histograms.
Any real number $x$ can be rounded off (or quantized) to the nearest
integer, say $q(x) = [x]$, with a resulting quantization error $e(x) = x - q(x)$, for
example, $q(2.14259) =2$. It means that the restored signal may differ
from the original one and some information can be lost.
 Thus, in quantization of a continuous set of values there is always a distortion (also known as noise or error) between the original set of values and the quantized set of values. One of the main goals in quantization theory is to find a set of quantizers for which the distortion is minimum. For the most comprehensive overview of quantization
one can see \cite{GN} (for later references, see \cite{GL}).  Over the years several authors estimated the distortion measures for quantizers (see, e.g., \cite{LCG} and \cite{Z}). A class of
asymptotically optimal quantizers with respect to an $r$th-mean error distortion measure is
considered in \cite{GL1} (see also \cite{CG, SS1}). A different approach for uniform scalar quantization is
developed in \cite{SS2}, where the correlation properties of a Gaussian process are exploited to
evaluate the asymptotic behavior of the random quantization rate for uniform quantizers.
General quantization problems for Gaussian processes in infinite-dimensional functional
spaces are considered in \cite{LP}. In estimating weighted integrals of time series with no quadratic mean derivatives, by means of samples at discrete times, it is known that the rate of convergence of mean-square error is reduced from $n^{-2}$ to $n^{-1.5}$ when the samples are quantized (see \cite{BC1}). For smoother time series, with $k=1,2, \cdots$ quadratic mean derivatives, the rate of convergence is reduced from $n^{-2k-2}$ to $n^{-2}$ when the samples are quantized, which is a very significant reduction (see \cite{BC2}). The interplay between sampling and quantization is also studied in \cite{BC2}, which asymptotically leads to optimal allocation between the number of samples and the number of levels of quantization. Quantization also seems to be a promising tool in recent development in numerical probability (see, e.g., \cite{PPP}).

Let $\D R^d$ denote the $d$-dimensional Euclidean space equipped with a metric $\|\cdot\|$ compatible with the Euclidean topology. Let $P$ be a Borel probability measure on $\D R^d$ and $\ga$ be a finite subset of $\D R^d$. Then, $\int \min_{a \in \ga} \|x-a\|^2 dP(x)$ is often referred to as the \tit{cost,} or \tit{distortion error} for $\ga$ with respect to the probability measure $P$, and is denoted by $V(P; \ga)$.  Write
$\C D_n:=\set{\ga \sci \D R^d : 1\leq \te{card}(\ga)\leq n}$.  Then, $\inf\set{V(P; \ga) : \ga \in \C D_n}$ is called the \tit{$n$th quantization error} for the probability measure $P$, and is denoted by $V_n:=V_n(P)$. A set $\ga$ for which the infimum occurs and contains no more than $n$ points is called an \tit{optimal set of $n$-means}. The set of all optimal sets of $n$-means for a probability measure $P$ is denoted by $\C C_n(P)$. Since $\int \| x\|^2 dP(x)<\infty$ such a set $\ga$ always exists (see \cite{ GKL, GL, GL1}). To know more details about quantization, one is referred to \cite{AW, GG, GL1, GN}. For any finite $\ga\sci \D R^d$, the \tit{Voronoi region} generated by an element $a \in \ga$ is defined by the set of all elements in $\D R^d$ which are closer to $a$ than to any other element in $\ga$, and is denoted by $M(a|\ga)$, i.e.,
\[M(a|\ga)=\set{x \in \D R^d : \|x-a\|=\min_{b \in \ga}\|x-b\|}.\]
A Borel measurable partition $\set{A_a : a \in \ga}$ of $\D R^d$  is called a \tit{Voronoi partition} of $\D R^d$ with respect to $\ga$ (and $P$) if $P$-almost surely $A_a \sci M(a|\ga) \te{ for every $a \in \ga$}.$
Let $P$ be a continuous Borel probability measure on $\D R^d$, $\ga$ be an optimal set of $n$-means for $P$, and $a\in \ga$. Then, it is well-known that (see \cite{GG, GL1}) $M(a|\ga)$ has positive measure with its boundary has measure zero, $a$ is the conditional expectation of its own Voronoi region, and  $P$-almost surely the set $\set{M(a|\ga) : a \in \ga}$ forms a Voronoi partition of $\D R^d$.

Let $S_1, S_2 : \mathbb R \to \mathbb R$ be two contractive similarity mappings such that
$S_1(x)=s_1 x$ and $ S_2 (x)=s_2 x +(1-s_2)$,
where $0<s_1, s_2<1$ and $s_1+s_2<1$.
Let $(p_1, p_2)$ be a probability vector with $p_1, p_2>0$. Then, there exists a unique Borel probability measure $P$
on $\mathbb R$ such that
$P=p_1 P\circ S_1^{-1}+p_2 P\circ S_2^{-1}$, where $P\circ S_i^{-1}$ denotes the image measure of $P$ with respect to
$S_i$ for $i=1, \,2$ (see \cite{H}). If $k\in \D N$, and $\sigma:=\sigma_1\sigma_2 \cdots \sigma_k \in \{ 1, 2\}^k$, then we call $\gs$ a word of length $k$, and denote it by
$|\gs|:=k$. By $\set{1, 2}^\ast$, we denote the set of all words including the empty word $\es$. Notice that the empty word has length zero. We write \[S_\gs:=S_{\gs_1}\circ \cdots \circ S_{\gs_k}, \, p_\gs:=p_{\gs_1} p_{\gs_2} \cdots p_{\gs_k}, \, s_\gs:=s_{\gs_1} s_{\gs_2} \cdots s_{\gs_k} \text{ and } J_\gs:=S_\gs([0, 1]).\]
Then, the set $C:=\bigcap_{k\in \mathbb N} \bigcup_{\sigma \in \{1, 2\}^k} J_\sigma$ is known as the \textit{Cantor set} generated by the
two mappings $S_1$ and $S_2$, and equals the support of the probability measure $P$, where $P$ can be written as
 \[P=\sum_{\gs\in I^k} p_\gs P\circ S_\gs^{-1}.\]
 $P$ is called a \tit{uniform Cantor distribution} supported by the Cantor set $C$ if $\frac{p_1}{s_1}=\frac{p_2}{s_2}$, otherwise, $P$ is called a \tit{nonuniform Cantor distribution}. When $s_1=s_2=\frac 13$ and
$p_1=p_2=\frac 12$, i.e., for the uniform distribution  $P=\frac 1 2 P\circ S_1^{-1}+\frac 1 2 P\circ S_2^{-1}$ with support the classical
Cantor set, Graf and Luschgy determined the optimal sets of $n$-means and the $n$th quantization errors for all $n\geq 2$ (see \cite{GL2}).  In this paper, we have taken $s_1=\frac 1 4$, $s_2=\frac 1 2$, $p_1=\frac 1 4$ and $p_2=\frac 3 4$, i.e., the probability measure $P$ considered here is nonuniform and satisfies
$P=\frac 1 4 P\circ S_1^{-1} +\frac 34 P\circ S_2^{-1}$, where $S_1(x)=\frac 1 4x$ and $S_2(x)=\frac 12 x +\frac 1 2$ for $x \in \mathbb R$. For this probability measure, in this paper, we investigate the optimal sets of $n$-means and the $n$th quantization errors for all positive integers $n$. The arrangement of the paper is as follows: Lemma~\ref{lemma2}, Lemma~\ref{lemma11}, and  Lemma~\ref{lemma12} give the optimal sets of $n$-means for $n=1, 2$ and $3$. Proposition~\ref{prop29}, Proposition~\ref{prop28},  Proposition~\ref{prop30}, and Proposition~\ref{prop31} give some properties about the optimal sets of $n$-means for all $n\geq 2$.  Theorem~\ref{Th1} gives the induction formula to determine the optimal sets of $n$-means and the $n$th quantization errors for all $n\geq 2$. In Section~\ref{sec4}, using the induction formula we obtain some results and observations about the optimal sets of $n$-means for $n\in \D N$.
In Section~\ref{sec5}, we have shown that the same induction formula works for the Cantor distribution $P:=\psi^2 P\circ S_1^{-1} +\psi^4 P\circ S_2^{-1}$ supported by the Cantor set generated by $S_1(x)=\frac 13x$ and $S_2(x)=\frac 13 x+\frac 23$ for all $x\in \mathbb R$, where $\psi$ is the square root of the Golden ratio $\frac 12(\sqrt 5-1)$. In fact, the same induction formula also works for the uniform Cantor distribution considered by Graf-Luschgy (see \cite {GL2}). In Section~\ref{sec6}, we give a counter example to show that the induction formula does not work for all Cantor distributions. Finally, we would like to mention that quantization for uniform Cantor distributions were investigated by several authors, for example, see \cite{GL2, K1, K2, KZ}. But, to the best of our knowledge, the work in this paper is the first advance to investigate the quantization for nonuniform Cantor distributions. The main difference with the uniform and the nonuniform distributions is that for a uniform distribution there is a closed formula for optimal quantizers for $n$-means for all $n\geq 2$ (see \cite{GL2}), but for the nonuniform distribution, considered in this paper, to obtain the optimal quantizers for $n$-means for all $n\geq 2$ a closed formula is not known yet.

\section{Preliminaries}
In this section, we give the basic preliminaries.
$\gs\tau:=\gs_1\cdots \gs_k\tau_1\cdots \tau_\ell$ represents the
concatenation of the words $\gs:=\gs_1\gs_2\cdots \gs_k$ and
$\tau:=\tau_1\tau_2\cdots \tau_\ell$ in $\{1, 2\}^*$. We say that a word $\gs$ is a \tit{predecessor} of a word $\gt$, denoted by $\gs\prec \gt$, if $\gt=\gs\gg$ for some word $\gg\in \set{1,2}^\ast$.
For $\gs=\gs_1 \gs_2 \cdots \gs_k \in \{1, 2\}^k$, let us write
$c(\gs):=\#\{i : \gs_i=1, \, 1\leq i\leq k\}$. Then
$\{J_\gs\}_{\gs \in \{1, 2\}^k}$ is the set of  $2^k$ intervals with the
length of $J_\gs$ equals $\lambda(J_\gs):= \frac{1}{4^{c(\gs)}}\frac 1 {2^{k-c(\gs)}} =\frac{1}{2^{k+c(\gs)}}$ at the $k$th level of the
Cantor construction, where $\lambda$ denotes the Lebesgue measure on $[0, 1]$. The intervals $J_{\gs1}$, $J_{\gs 2}$ into which $J_\gs$
is split up at the $(k+1)$th level are called the \tit{children} of $J_\gs$. Moreover, for any $\gs \in \{1, 2\}^*$, we have $P(J_\gs)=p_\gs=\frac{3^{|\gs|-c(\gs)}}{4^{|\gs|}}$,
and $\lambda(J_{\gs})=\frac{1}{2^{|\gs|+c(\gs)}}$.

Let us now state the following two lemmas. The proofs are routine (see \cite{GL2}).
\begin{lemma} \label{lemma1}
Let $f : \mathbb R \to \mathbb R^+$ be Borel measurable and $k\in \mathbb N$. Then
\[\int f dP=\sum_{\gs \in \{1, 2\}^k} p_\gs \int f \circ S_\gs dP.\]
\end{lemma}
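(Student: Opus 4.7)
The plan is to prove this by induction on $k$, using the self-similarity identity $P = p_1 P\circ S_1^{-1} + p_2 P\circ S_2^{-1}$ repeatedly together with the standard change-of-variables formula, which says that for a Borel measurable $f$ and a measurable map $T$, $\int f\, d(P\circ T^{-1}) = \int (f\circ T)\, dP$.

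For the base case $k=1$, I would simply apply the self-similarity identity to the integral:
\[
\int f\, dP = p_1 \int f\, d(P\circ S_1^{-1}) + p_2 \int f\, d(P\circ S_2^{-1}) = p_1 \int f\circ S_1\, dP + p_2 \int f\circ S_2\, dP,
\]
where the second equality is change of variables. This is precisely the claim for $k=1$, since $\{1,2\}^1 = \{1,2\}$ and $p_{(i)} = p_i$.

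For the inductive step, assume the formula holds for some $k\geq 1$. For each $\sigma\in\{1,2\}^k$, apply the $k=1$ case to the function $f\circ S_\sigma$ to get
\[
\int f\circ S_\sigma\, dP = p_1 \int f\circ S_\sigma \circ S_1\, dP + p_2 \int f\circ S_\sigma \circ S_2\, dP.
\]
Multiplying by $p_\sigma$ and using the multiplicativity $p_\sigma p_i = p_{\sigma i}$ and concatenation identity $S_\sigma \circ S_i = S_{\sigma i}$, summing over all $\sigma\in\{1,2\}^k$, and noting that $\{1,2\}^{k+1}$ is precisely the disjoint union of $\{\sigma 1 : \sigma\in\{1,2\}^k\}$ and $\{\sigma 2 : \sigma\in\{1,2\}^k\}$, gives the formula at level $k+1$.

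There is no real obstacle here; the main thing to be careful about is the indexing of the concatenated words and the fact that $S_\sigma = S_{\sigma_1}\circ\cdots\circ S_{\sigma_k}$ is defined so that $S_\sigma \circ S_i = S_{\sigma i}$ (as opposed to $S_{i\sigma}$). One could alternatively give a more measure-theoretic proof by iterating the identity $P = \sum_i p_i P\circ S_i^{-1}$ to obtain $P = \sum_{\sigma\in\{1,2\}^k} p_\sigma P\circ S_\sigma^{-1}$ directly (by induction on $k$ in the same fashion) and then invoking change of variables once at the end; I would present whichever version reads more cleanly.
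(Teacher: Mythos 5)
Your proof is correct and takes essentially the same route as the paper: both iterate the self-similarity identity $P = p_1 P\circ S_1^{-1} + p_2 P\circ S_2^{-1}$ by induction on $k$ and invoke change of variables to pass from pushforward measures to compositions. The paper chooses to iterate at the measure level (obtaining $P=\sum_{\sigma\in\{1,2\}^k}p_\sigma P\circ S_\sigma^{-1}$) and then apply change of variables once at the end, which is exactly the alternative you note in your closing paragraph.
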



\begin{lemma} \label{lemma2} Let $X$ be a real valued random variable with distribution $P$. Let $E(X)$ represent the expected value and $V:=V(X)$ represent the variance of the random variable $X$. Then,
\[E(X)=\frac 23  \text{ and } V(X)=\frac {16}{153}.\]
\end{lemma}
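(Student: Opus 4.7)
The plan is to derive two linear equations for $E(X)$ and $E(X^2)$ by exploiting the self-similarity of $P$ through Lemma~\ref{lemma1}, then combine them to produce the variance.

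First, I would apply Lemma~\ref{lemma1} with $k=1$ and the function $f(x)=x$. Since $S_1(x)=\frac14 x$ and $S_2(x)=\frac12 x+\frac12$, this yields
\[
E(X)=\tfrac{1}{4}\int S_1(x)\,dP+\tfrac{3}{4}\int S_2(x)\,dP=\tfrac{1}{16}E(X)+\tfrac{3}{8}E(X)+\tfrac{3}{8},
\]
i.e.\ $E(X)=\tfrac{7}{16}E(X)+\tfrac{3}{8}$. Solving this single linear equation gives $E(X)=\tfrac{2}{3}$.

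Next, I would apply Lemma~\ref{lemma1} with $k=1$ and $f(x)=x^{2}$. Expanding $(S_2(x))^{2}=\tfrac14 x^{2}+\tfrac12 x+\tfrac14$ and using linearity of the integral together with the value of $E(X)$ already obtained, one finds a linear equation of the form
\[
E(X^{2})=\tfrac{1}{64}E(X^{2})+\tfrac{3}{16}E(X^{2})+\tfrac{3}{8}E(X)+\tfrac{3}{16},
\]
which reduces to $\tfrac{51}{64}E(X^{2})=\tfrac{7}{16}$, hence $E(X^{2})=\tfrac{28}{51}$. Then $V(X)=E(X^{2})-(E(X))^{2}=\tfrac{28}{51}-\tfrac{4}{9}=\tfrac{16}{153}$, completing the computation.

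There is no real obstacle here; the self-similar functional equation for $P$ makes both $E(X)$ and $E(X^2)$ fixed points of affine contractions on $\mathbb{R}$, so each is determined by a single linear equation. The only care needed is the bookkeeping when expanding $(S_2(x))^2$ and in substituting the value of $E(X)$ before solving for $E(X^2)$; I would double-check the coefficients $\tfrac{1}{16}+\tfrac{6}{16}=\tfrac{7}{16}$ and $\tfrac{1}{64}+\tfrac{12}{64}=\tfrac{13}{64}$ to be sure the contractive factor $1-\tfrac{13}{64}=\tfrac{51}{64}$ is correct.
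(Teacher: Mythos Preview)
Your proposal is correct and follows essentially the same approach as the paper: both derive linear equations for $E(X)$ and $E(X^2)$ from the self-similarity relation (your invocation of Lemma~\ref{lemma1} with $k=1$ is exactly what the paper does directly), solve them to obtain $E(X)=\tfrac{2}{3}$ and $E(X^2)=\tfrac{28}{51}$, and then compute the variance. The arithmetic and coefficients you check match the paper's computation line for line.
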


\begin{note}\label{note1} For any $x_0 \in \D R$, we have
$\int(x-x_0)^2 dP =V(X)+(x_0-E(X))^2$ yielding the fact that the optimal set of one-mean is the expected value and the corresponding quantization error is the variance $V$ of the random variable $X$. Since $S_1$ and $S_2$ are similarity mappings, we have $E(S_j(X))=S_j(E(X))$ for $j=1, 2$ and so, by induction, $E(S_\gs(X))=S_\gs(E(X))=S_\gs(\frac 23)$ for $\gs\in \{1, 2\}^k$, $k\geq 1$. For $\gs \in \{1, 2\}^\ast$ write
$a(\gs):=E(X : X \in J_\gs)$,  and for  $\gs,  \gt, \cdots, \gg \in \{1, 2\}^\ast$ write \[a(\gs, \gt, \cdots, \gg) :=E(X : X \in J_\gs\uu J_\gt\uu \cdots \uu J_\gg).\] Then, using Lemma~\ref{lemma1}, we have
\begin{align*}
&a(\gs)=\frac{1}{P(J_\gs)} \int_{J_\gs} x dP= \sum_{\tau\in \{1, 2\}^k} \int_{J_\gs} x\, d(P\circ S_\tau^{-1})=\int_{J_\gs} x d(P\circ S_\gs^{-1})=\int S_\gs(x) dP=E(S_\gs(X)),\\
& \te{ and similarly, } a(\gs, \gt)=\frac{1}{P(J_\gs\uu J_\gt)} \Big( P(J_\gs) S_\gs(\frac 23)+P(J_\gt) S_\gt(\frac 23)\Big).
\end{align*}
For any $a\in \D R$ and $\gs \in \set{1, 2}^\ast$, we have
\begin{equation} \label{eq1} \int_{J_\gs}(x-a)^2 dP= p_\gs \int (x -a)^2 d(P\circ S_\gs^{-1})=p_\gs \Big(s_\gs^2  V+\Big(S_\gs(\frac 23)-a\Big)^2\Big).\end{equation}
The equation~\eqref{eq1} is used to determine the quantization error.
\end{note}
 In the next section we determine the optimal sets of $n$-means and the $n$th quantization errors $V_n$ for all $n\geq 2$.

\section{Optimal sets and the error for all $n\geq 2$} \label{sec3}

In this section, we first give some lemmas and propositions that we need to deduce the theorem Theorem~\ref{Th1} which gives the induction formula to determine the optimal sets of $n$-means and the $n$th quantization errors for all $n\geq 2$.

\begin{lemma}\label{lemma11}
Let $\ga=\{a_1, a_2\}$ be an optimal set of two-means, $a_1<a_2$. Then, $a_1=a(1)=S_1(\frac 23)$,  $a_2=a(2)=S_2(\frac 2 3 )$, and the quantization error is
$V_2=\frac{13}{612}=0.0212418.$
\end{lemma}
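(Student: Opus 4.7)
The plan is to prove the lemma in two stages: first exhibiting $\{a(1), a(2)\}$ as a 2-point configuration achieving distortion $\tfrac{13}{612}$, and then showing no other 2-point set does better.

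\textbf{Upper bound.} First I compute $a(1) = S_1(\tfrac23) = \tfrac16$ and $a(2) = S_2(\tfrac23) = \tfrac56$, whose midpoint $\tfrac12$ sits at the right edge of the gap $(\tfrac14, \tfrac12)$ separating $J_1 = [0, \tfrac14]$ and $J_2 = [\tfrac12, 1]$. Since $P(\{\tfrac12\}) = 0$, the Voronoi partition of $\D R$ induced by $\{a(1), a(2)\}$ coincides $P$-a.e.\ with $\{J_1, J_2\}$, so applying \eqref{eq1} with $\gs \in \{1, 2\}$,
\[V(\{a(1), a(2)\}) = p_1 s_1^2 V + p_2 s_2^2 V = \tfrac{1}{64}V + \tfrac{3}{16}V = \tfrac{13}{64}V = \tfrac{13}{612}.\]
This establishes $V_2 \leq \tfrac{13}{612}$.

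\textbf{Optimality.} Let $\ga = \{a_1, a_2\}$, $a_1 < a_2$, be an optimal set of two-means, and write $m := (a_1 + a_2)/2$ for the Voronoi midpoint. I case-analyze on the position of $m$ relative to the gap $[\tfrac14, \tfrac12]$. Extreme cases $m \leq 0$ or $m \geq 1$ are ruled out by Proposition~\ref{prop0}(i), since one of the Voronoi regions would then carry $P$-mass zero. If $m \in [\tfrac14, \tfrac12]$, the two Voronoi regions cut $\mathrm{supp}(P) = C$ exactly into $J_1 \cap C$ and $J_2 \cap C$ (modulo a $P$-null set), so Proposition~\ref{prop0}(iii) together with Note~\ref{note1} forces $a_1 = a(1)$ and $a_2 = a(2)$, matching the candidate and giving $V(\ga) = \tfrac{13}{612}$. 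If $m < \tfrac14$, then $M(a_2|\ga) = [m, \infty)$ absorbs $(J_1 \cap [m, \tfrac14]) \cup J_2$, and I bound
\[V(\ga) \geq \int_{J_2}(x - a_2)^2 dP = p_2\bigl(s_2^2 V + (a(2) - a_2)^2\bigr),\]
then add the strictly positive contribution from the $J_1$-sliver absorbed into $M(a_2|\ga)$, while using $a_2 = E(X \mid X \in M(a_2|\ga))$ (Proposition~\ref{prop0}(iii)) to pin down $a_2$; the total strictly exceeds $\tfrac{13}{612}$, contradicting optimality. The case $m > \tfrac12$ is handled symmetrically, using a lower bound on $\int_{J_1}(x-a_1)^2 dP$.

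\textbf{Main obstacle.} The delicate step is the bad-case analysis when $m \notin [\tfrac14, \tfrac12]$. The challenge is that as $m \uparrow \tfrac14$ the $J_1$-sliver $J_1 \cap [m, \tfrac14]$ contributes less and less mass, so one cannot get a cheap lower bound from it alone; one must simultaneously exploit that the fixed-point equation $a_2 = E(X \mid X \in M(a_2|\ga))$ prevents $a_2$ from coinciding with $a(2) = \tfrac56$ in this regime, forcing the $(a(2) - a_2)^2$ slack in the above bound to stay bounded away from zero. The cleanest route is to parametrize by $m$, use the self-similar decomposition of $P$ via \eqref{eq1} to express $V(\ga)$ as an explicit function of $m$ on each subinterval where $m$ can lie, and verify the strict inequality $V(\ga) > \tfrac{13}{612}$ piecewise.
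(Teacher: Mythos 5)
Your overall architecture (exhibit the candidate to get $V_2\le \tfrac{13}{612}$, then case-split on the Voronoi boundary $m=(a_1+a_2)/2$ relative to the gap $[\tfrac14,\tfrac12]$) matches the paper's, and the upper bound and the case $m\in[\tfrac14,\tfrac12]$ are handled correctly. The gap is in the remaining cases $m<\tfrac14$ and $m>\tfrac12$: you sketch them but do not carry them out, and the sketch rests on a claim that is false. You say that as $m\uparrow\tfrac14$ one should use the centroid condition to keep the slack $(a(2)-a_2)^2$ \emph{bounded away from zero}; but that condition makes $a_2$ the mean of $([m,\tfrac14]\cup J_2)\cap C$, and as $m\uparrow\tfrac14$ the mass of the sliver $[m,\tfrac14]\cap C$ tends to zero, so $a_2\to a(2)$ and the slack vanishes. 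There is no uniform gap to exploit in that parametrization, so the piecewise-in-$m$ verification you propose would not close at the critical endpoint.

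The paper sidesteps the $m<\tfrac14$ case entirely: it first shows $a_2>\tfrac{47}{64}$ by a single concrete bound on $\int_{J_{22}}(x-\tfrac{47}{64})^2\,dP$, and since $a_1\ge 0$ this forces $m\ge\tfrac{47}{128}>\tfrac14$, so $M(a_1|\ga)\supseteq J_1$ and hence $a_1\ge a(1)$. The only genuinely delicate case is then $m>\tfrac12$, and the paper handles it by locating $m$ in a Cantor gap $[S_{2\gs1}(1),S_{2\gs2}(0)]$ for some word $\gs$, computing the centroids $a_1=E(X:X\in J_1\cup J_{2\gs1})$ and $a_2$ exactly, and checking that the resulting distortion exceeds $\tfrac{13}{612}$ (verified for $\gs=1^9$, where the inequality holds by a razor-thin margin $0.021241830065359477413>0.021241830065359477124$). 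That numerical tightness is exactly why a soft argument appealing to a ``slack bounded away from zero'' cannot work here; you would need to replicate the paper's word-by-word comparison (or prove $a_2>\tfrac{47}{64}$ up front, as the paper does, to eliminate half the case analysis before it starts).
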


\begin{proof} Let us first consider a two-point set $\beta$ given by $\beta:=\{a(1), a(2)\}$. Then,
\begin{align*}
\int \min_{b \in \beta}(x-b)^2  dP&=\sum_{i=1}^2 \int_{J_i} (x-a(i))^2 dP=\frac 1{64} V+\frac 3{16} V=\frac{13}{612}=0.0212418.
\end{align*}
Since $V_2$ is the quantization error for two-means, we have $V_2\leq \frac {13}{612}=0.0212418$.
Let $\ga:=\{a_1, a_2\}$ be an optimal set of two-means. Since $a_1$ and $a_2$ are the expected values of the random variable $X$ with distribution $P$ in their own Voronoi regions, we have $0\leq a_1<a_2\leq 1$. Suppose that $a_2\leq \frac{47}{64}<\frac 34=S_{22}(0)$.
Then using \eqref{eq1}, we have
\begin{align*}
V_2&=\int \min_{a \in \ga} (x-a)^2 dP\geq \int_{J_{22}} (x -\frac{47}{64})^2 dP=\frac{24921}{1114112}=0.0223685>V_2,
\end{align*}
which is a contradiction. So, we can assume that $\frac{47}{64}<a_2$. Since $a_1\geq 0$, we have $\frac 12(a_1+a_2)\geq \frac 12(0+\frac{47}{64})=\frac{47}{128}>\frac 14$, yielding the fact that the Voronoi region of $a_1$ may contain points from $J_2$, but the Voronoi region of $a_2$ does not contain any point from $J_1$ implying $E(X : X\in J_1)=a(1)=\frac 16\leq a_1$ and $E( X : X \in J_2)=\frac 56\leq a_2$. Notice that if the Voronoi region of $a_1$ does not contain any point from $J_2$, then $a_1=\frac 16$ and $a_2=\frac 56$. If $\frac{15}{32}<a_1$, then
\[V_2\geq \int_{J_1}(x-\frac{15}{32})^2 dP=\frac{5107}{208896}=0.0244476>V_2,\]
which yields a contradiction, and so $\frac 16\leq a_1\leq \frac{15}{32}$.
 We now show that the Voronoi region of $a_1$ does not contain any point from $J_2$. Notice that $\frac 16<a_1\leq\frac{15}{32}$ and $\frac 56<a_2\leq 1$ implying $\frac 12<\frac 12(a_1+a_2)\leq \frac {47}{64}<S_{22}(0)$. For the sake of contradiction, assume that the Voronoi region of $a_1$ contains points from $J_2$, and so the following two cases can arise:

Case~1. $S_{21}(1)=\frac 58\leq \frac 12(a_1+a_2)\leq \frac {47}{64}<S_{22}(0)$.

Then, $a_1=E(X : X\in J_1\uu J_{21})=\frac{29}{84}$  and $a_2=E(X : X \in J_{22})=\frac{11}{12}$ implying
\[V_2=\int_{J_1\uu J_{21}}(x-\frac {29}{84})^2 dP+\int_{J_{22}}(x-\frac {11}{12})^2 dP=\frac{415}{17136}=0.024218>V_2,\]
which gives a contradiction.


Case~2. $\frac 12=S_{21}(0)<\frac 12(a_1+a_2)<S_{21}(1)=\frac 58$.

Then, there exists a word  $\gs \in \set{1, 2}^\ast$ such that
$S_{2\gs 1}(1)\leq \frac 12(a_1+a_2)\leq S_{2\gs2}(0)$. For definiteness sake and calculation simplicity, take $\gs=1^9=111111111$, where for any positive integer $k$, by $1^k$ it is meant that the word is obtained from $k$ times concatenation of the symbol 1. Thus, we have $S_{21^{9}1}(1)\leq \frac 12(a_1+a_2)\leq S_{21^92}(0)$, and so
\begin{align*}
a_1&=E(X : X\in J_1\uu J_{21^91})=\frac{549760532483}{3298544320512}, \te{ and } \\
a_2&=E(X : X\in \mathop{\uu}_{k=1}^9J_{21^k2}\uu J_{22})=\frac{2621441}{3145728},
\end{align*}
implying
\begin{align*} &V_2=\int\min_{a\in \ga}(x-a)^2 dP=\int_{J_1\uu J_{21^91}}(x-a_1)^2 dP+\int_{\mathop{\uu}\limits_{k=1}^9J_{21^k2}\uu J_{22}}(x-a_2)^2dP\\
&=0.021241830065359477413>0.021241830065359477124=\frac {13}{612}\geq V_2,
\end{align*}
which leads to a contradiction. Similarly, we can show that a contradiction arises for any other choice of $\gs \in \set{1, 2}^\ast$ satisfying $S_{2\gs 1}(1)\leq \frac 12(a_1+a_2)\leq S_{2\gs2}(0)$. Hence, the Voronoi region of $a_1$ does not contain any point from $J_2$ yielding $a_1\leq a(1)=\frac 16$. Again, we have seen that $a_1\geq \frac 16$. Thus, we deduce that $a_1=a(1)=\frac 16$ and $a_2=a(2)=\frac 56$, and then the quantization error is $V_2=\frac {13}{612}$, which is the lemma.
\end{proof}

The technique of the proof of the following proposition is similar to Lemma 4.5 in \cite{GL2}.
\begin{prop}\label{prop29}
Let $n\geq 2$ and let $\ga_n$ be an optimal set of $n$-means such that $\ga_n$ contains points from $J_1$ and $J_2$, and $\ga_n$
does not contain any point from the open interval $(\frac 14, \frac 12)$. Further, assume that the Voronoi region of any point in $\ga_n\ii J_1$ does not contain any point from $J_2$ and the Voronoi region of any point in $\ga_n\ii J_2$ does not contain any point from $J_1$. Set $\ga_1:=\ga_n\ii J_1$, $\ga_2:=\ga_n\ii J_2$, and $j:=\te{card}(\ga_1)$. Then, $S_1^{-1}(\ga_1)$ is an optimal set of $j_1$-means and $S_2^{-1}(\ga_2)$ is an optimal set of $(n-j_1)$-means for the probability measure $P$, and
\[V_n=\frac 1{64} V_{j_1}+\frac{3}{16} V_{n-j_1}.\]
\end{prop}

We now state and prove the following lemma.
\begin{lemma} \label{lemma12} Let $\ga=\set{a_1, a_2, a_3}$ be an optimal set of three-means such that $a_1<a_2<a_3$. Then, $a_1=a(1)=S_1(\frac 23)=\frac 16$, $a_2=a(21)=S_{21}(\frac 23)=\frac 7{12}$, and $a_3=a(22)=S_{22}(\frac 23)=\frac {11}{12}$, and $V_3=\frac{55}{9792}=0.00561683.$
\end{lemma}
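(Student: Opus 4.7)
The plan is to proceed in the style of Lemma~\ref{lemma11}: first establish the upper bound via an explicit candidate three-point set, and then show that any optimal three-point set must coincide with it. For the upper bound, I would evaluate the distortion of the candidate $\gb = \{a(1), a(21), a(22)\} = \{\tfrac 16, \tfrac 7{12}, \tfrac{11}{12}\}$. Applying \eqref{eq1} with $a = a(\gs) = S_\gs(\tfrac 23)$ on each of the cells $J_1$, $J_{21}$, $J_{22}$ gives $\int_{J_\gs}(x - a(\gs))^2 dP = p_\gs s_\gs^2 V$, and summing the three contributions with $V = \tfrac{16}{153}$ yields $V_3 \leq \tfrac{55}{9792}$.

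For the lower bound, let $\ga = \{a_1, a_2, a_3\}$ with $a_1 < a_2 < a_3$ be an optimal set; by Proposition~\ref{prop0}, each $a_i$ lies in $[0,1]$ and is the conditional expectation of $X$ on its Voronoi region $M(a_i|\ga)$. First I would rule out unbalanced allocations of the three points between $J_1$ and $J_2$: if $a_1 \geq \tfrac 12$, the distortion on $J_1$ alone already exceeds $\int_{J_1}(x-\tfrac 12)^2 dP = \tfrac{1}{34} > \tfrac{55}{9792}$ (as computed in Lemma~\ref{lemma11}); and if $a_2 \leq \tfrac 14$, so that only $a_3$ serves $J_2$, the minimum cost of a single quantizer on $J_2$ is $p_2 s_2^2 V = \tfrac{1}{51} > \tfrac{55}{9792}$. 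Both would contradict the upper bound, forcing $a_1 \in J_1$ and $a_2, a_3 \in J_2$.

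The crux is then to show that the Voronoi boundary $\tfrac 12(a_1+a_2)$ lies in the gap $(\tfrac 14, \tfrac 12)$. Granted this, $M(a_1|\ga) \ii C = J_1 \ii C$, giving $a_1 = a(1) = \tfrac 16$; moreover $\{a_2, a_3\}$ then forms an optimal 2-means for the conditional measure $P|_{J_2}/P(J_2)$, which equals the pushforward $P \circ S_2^{-1}$ by self-similarity. Combining Lemma~\ref{lemma11} with the identity $E(S_\gs(X)) = S_\gs(E(X))$ from Note~\ref{note1} shows the optimal 2-means for $P \circ S_2^{-1}$ are $\{S_2(a(1)), S_2(a(2))\} = \{a(21), a(22)\} = \{\tfrac 7{12}, \tfrac{11}{12}\}$, and \eqref{eq1} then delivers $V_3 = \tfrac{55}{9792}$.

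The main obstacle is proving the gap assertion. Here I would mimic the $\gs = 1^9$ argument of Lemma~\ref{lemma11}: if instead $\tfrac 12(a_1+a_2)$ lay inside $J_2 \ii C$, it would sit between a pair of consecutive Cantor intervals $J_{2\gs 1}$ and $J_{2\gs 2}$ for some $\gs \in \{1,2\}^\ast$. I would then solve self-consistently for $a_1$ and $a_2$ as the conditional means on the left and right pieces of $C$ determined by this boundary (using the explicit $J_{2\gt}$ decompositions), and invoke \eqref{eq1} to show the induced distortion strictly exceeds $\tfrac{55}{9792}$ for every such $\gs$. Locating the most delicate $\gs$, analogous to the $1^9$ calculation in the two-means case, is the bookkeeping-heavy step, and it is this case analysis --- rather than any new idea --- that constitutes the real work of the lemma.
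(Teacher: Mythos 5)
Your high-level idea---pin the boundary $\tfrac 12(a_1+a_2)$ in the gap $(\tfrac 14,\tfrac 12)$, conclude $a_1=a(1)$, and then let self-similarity plus Lemma~\ref{lemma11} handle $\{a_2,a_3\}$ via $P\circ S_2^{-1}$---is a genuine simplification of the paper's route: it amounts to Proposition~\ref{prop29} specialised to $n=3$, and it would spare you the separate analysis of the $a_2$--$a_3$ boundary (the final ``$\gs=1^9$''-style step in the paper's Lemma~\ref{lemma12}). However, two of the supporting steps do not hold as written.

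First, your exclusion of $a_2\leq \tfrac 14$ is flawed: from $a_2\leq\tfrac 14$ it does \emph{not} follow that ``only $a_3$ serves $J_2$.'' The Voronoi boundary $\tfrac 12(a_2+a_3)$ can exceed $\tfrac 12$ (e.g.\ whenever $a_3>1-a_2$), so $M(a_2|\ga)$ may well swallow a piece of $J_{21}$, and the cost on $J_2$ is then a constrained 2-means error, not $p_2s_2^2V$. The obvious replacement bound $\int_{J_2}\min\{(x-\tfrac14)^2,(x-a_3)^2\}dP$ is only $\geq \tfrac{3}{16}V_2\approx 0.00398$, which is \emph{below} the target $\tfrac{55}{9792}\approx 0.00562$, so the argument does not close. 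The paper instead rules out the stronger $a_2\leq\tfrac 7{16}$ by bounding $J_{21}$ and $J_{22}$ separately, after first checking that $M(a_2|\ga)$ cannot reach $J_{22}$.

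Second, your gap assertion is handled only on one side. Having (once corrected) $a_1<\tfrac 12$ and $a_2>\tfrac 14$ only gives $\tfrac 12(a_1+a_2)\in(\tfrac 18,\tfrac 34)$, so the boundary could still sit inside $J_1$ as well as inside $J_2$. You propose the ``$\gs=1^9$'' analysis only for $J_2$, and even there the three-point version is more delicate than Lemma~\ref{lemma11}'s two-point version, since for each hypothetical position of the $a_1$--$a_2$ boundary you must also locate the $a_2$--$a_3$ boundary before \eqref{eq1} can be applied. Note that the paper does \emph{not} pin this boundary by a $1^9$-type argument at all: it derives $a_1\in[\tfrac{403}{3552},\tfrac 14]$ and $a_2\in(\tfrac 7{16},\tfrac 34)$ by direct distortion estimates (including a separate exclusion of $\tfrac 14<a_1<\tfrac 5{16}$), which squeeze $\tfrac 12(a_1+a_2)$ into $(\tfrac 14,\tfrac 12)$ with no case analysis over $\gs$. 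The $1^9$-type step in the paper's proof is reserved for showing $M(a_2|\ga)$ does not meet $J_{22}$---a step your reduction would replace, but only once the gap claim is actually established.
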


\begin{proof}
Let us first consider a set of three points given by $\gb:=\set{a(1), a(21), a(22)}$. The distortion error due to the set $\gb$ is given by
\[\int\min_{a\in \gb}(x-a)^2 dP=\int_{J_1}(x-a(1))^2 dP+\int_{J_{21}}(x-a(21))^2dP+\int_{J_{22}}(x-a(22))^2 dP=\frac{55}{9792}.\]
Since $V_3$ is the quantization error for three-means, $V_3\leq \frac{55}{9792}=0.00561683$. Let $\ga:=\set{a_1, a_2, a_3}$ be an optimal set of three-means. Since the points in an optimal set are the expected values in their own Voronoi regions with respect to the probability distribution $P$, we have $0\leq a_1<a_2<a_3\leq 1$. If $a_3\leq \frac {27}{32}=\frac 12(S_{221}(1)+S_{222}(0))$, then
\[V_3\geq \int_{J_{222}}(x-\frac{27}{32})^2 dP=\frac{6939}{1114112}=0.00622828>V_3,\]
which is a contradiction. So, we can assume that $\frac {27}{32}<a_3$. If $a_1>\frac 5{16}$, then
\[V_3\geq \int_{J_1}(x-\frac 5{16})^2 dP=\frac{121}{17408}=0.00695083>V_3,\]
which yields a contradiction. If $\frac 14<a_1\leq \frac 5{16}$, then $\frac 12(a_1+a_2)>\frac 12$ implying $a_2>1-a_1\geq 1-\frac 5{16}=\frac {11}{16}>\frac 58=S_{21}(1)$, and so
\[V_3\geq \int_{J_1}(x-\frac 14)^2 dP+\int_{J_{21}}(x-\frac{11}{16})^2 dP=\frac{1193}{208896}=0.00571098>V_3,\]
which leads to a contradiction. Thus, we can assume that $a_1\leq \frac 14$.
Suppose that $a_2<\frac 7{16}$. Then, $S_{21}(1)=\frac 58<\frac 12(\frac 7{16}+a(22))=\frac {65}{96}<\frac 34=S_{22}(0)$ implying
\[V_3\geq \int_{J_{21}}(x-\frac 7{16})^2 dP+\int_{J_{22}}(x-a(22))^2dP=\frac{555}{69632}=0.00797047>V_3,\]
which is a contradiction. Next, suppose that $\frac 7{16}\leq a_2<\frac 12$. Then, $\frac 12(a_1+a_2)<\frac 14$ implying $a_1<\frac 12-a_2\leq \frac 12-\frac 7{16}=\frac 1{16}$. Moreover, $S_{21}(1)<\frac 12(\frac 12+a(22))=\frac {17}{24}<S_{22}(0)$, and so
\[V_3\geq \int_{J_{12}}(x-\frac 1{16})^2 dP+\int_{J_{21}}(x-\frac 12)^2 dP+\int_{J_{22}}(x-a(22))^2dP=\frac{667}{69632}=0.00957893>V_3,\]
which leads to a contradiction. Therefore, we can assume that $\frac 12\leq a_2$, and this implies that the Voronoi region of $a_2$ does not contain any point from $J_1$. Suppose that the Voronoi region of $a_1$ contains points from $J_2$. Then, $\frac 12(a_1+a_2)>\frac 12$, and so $a_2>1-a_1\geq 1-\frac 14=\frac 34>S_{21}(1)$ implying
\[V_3\geq \int_{J_1}(x-a(1))^2 dP+\int_{J_{21}}(x-\frac 34)^2 dP=\frac{35}{4896}=0.00714869>V_3,\]
which gives a contradiction. So, we can assume that the Voronoi region of $a_1$ does not contain any point from $J_2$. This implies that $a_1=a(1)=\frac 16$. Hence, by Proposition~\ref{prop29}, $S_2^{-1}(\ga_2)$ is an optimal set of two-means, which by Lemma~\ref{lemma11}, implies that $S_2^{-1}(\ga_2)=\set{a(1), a(2)}$ yielding $\ga_2=\set{a(21), a(22)}$. Hence, $\ga=\set{a(1), a(21), a(22)}$ is an optimal set of three-means and the corresponding quantization error is given by $V_3=\frac{55}{9792}=0.00561683$. Thus, the proof of the lemma is complete.
\end{proof}

\begin{prop} \label{prop28}
Let $\ga_n$ be an optimal set of $n$-means for $n\geq 2$. Then,
\[\ga_n\ii J_1\neq \es \te{ and } \ga_n\ii J_2 \neq \es.\]
Moreover, the Voronoi region of any point in $\ga_n\ii J_1$ does not contain any point from $J_2$, and the Voronoi region of any point in $\ga_n\ii J_2$ does not contain any point from $J_1$.
\end{prop}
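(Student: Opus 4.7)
The plan is proof by contradiction in each of the three sub-assertions, exploiting the monotonicity $V_n\leq V_2=\tfrac{13}{612}$ (valid for $n\geq 2$, since adjoining points never increases the quantization error) together with the cylinder identity \eqref{eq1}. To see $\ga_n\cap J_2\neq\es$, suppose every element of $\ga_n$ lies in $[0,\tfrac12)$; then for each $x\in J_2$ one has $\min_{a\in\ga_n}(x-a)^2\geq(x-\tfrac12)^2$, so \eqref{eq1} with $\gs=2$ yields
\[
V_n\;\geq\;\int_{J_2}\bigl(x-\tfrac12\bigr)^2\,dP \;=\; p_2\Bigl(s_2^2V+\bigl(S_2(\tfrac23)-\tfrac12\bigr)^2\Bigr) \;=\; \tfrac{7}{68}\;>\;\tfrac{13}{612}\;\geq\;V_n,
\]
contradicting optimality. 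Symmetrically, writing $\ga_n=\{a_1<\cdots<a_n\}$ and assuming $\ga_n\cap J_1=\es$, so $a_1>\tfrac14$: if $a_1\geq\tfrac12$ the parallel estimate $V_n\geq\int_{J_1}(x-\tfrac12)^2 dP=\tfrac{1}{34}>V_2$ gives the contradiction immediately, so one is left with the substantive subcase $a_1\in(\tfrac14,\tfrac12)$.

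In that remaining subcase $a_1=E(X:X\in V_1)$ for the Voronoi region $V_1=[0,(a_1+a_2)/2]\supseteq J_1$, while the conditional mean on $J_1$ alone is $a(1)=\tfrac16$. Since $a_1>\tfrac14>\tfrac16$, $V_1$ must capture positive $P$-mass from $J_2$, forcing $m:=(a_1+a_2)/2>\tfrac12$. By the Cantor construction, $m$ lies in some gap $(S_{2\gs 1}(1),S_{2\gs 2}(0))$ with $\gs\in\{1,2\}^\ast$, and this pins $a_1$ and $a_2$ down as explicit conditional expectations over complementary finite unions of basic cylinders. Substituting back into \eqref{eq1}---exactly as Lemma~\ref{lemma11} does for the representative $\gs=1^9$---produces a distortion strictly exceeding $\tfrac{13}{612}$, the desired contradiction. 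Once both non-emptiness claims are in hand, the Voronoi-region statement on the $J_2$-side is essentially automatic: for $a\in\ga_n\cap J_2$ the predecessor $a_{\mathrm{prev}}\in\ga_n$ exists (since $a_1\in J_1<a$) and is strictly positive (as the conditional mean of a region with positive $P$-mass on $(0,1]$), so $(a_{\mathrm{prev}}+a)/2>\tfrac14$ and $V_a\cap J_1=\es$. The $J_1$-side is the delicate direction: for $a\in\ga_n\cap J_1$ only the subcase where $a$ is the largest element of $\ga_n\cap J_1$ and the next element $a'\in\ga_n$ lies in $J_2$ requires work (when $a'$ lies in the gap $(\tfrac14,\tfrac12)$, one has $(a+a')/2<\tfrac38<\tfrac12$ for free), and the assumption $(a+a')/2>\tfrac12$ triggers the same gap-and-expectation computation as above.

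The principal obstacle is making the case analysis on $\gs$ uniform over the infinitely many gaps in $J_2$. In Lemma~\ref{lemma11} a single representative case was computed explicitly and the remainder dismissed as ``similar''; here a genuine quantitative estimate, valid for all $\gs$ at once, is required. The natural route is: once $m>\tfrac12$, the conditional-expectation equation forces $a_1$ a controllable distance above $a(1)=\tfrac16$ depending on $P(V_1\cap J_2)$, so the between-group term in the variance decomposition of $\int_{V_1}(x-a_1)^2\,dP$ contributes a positive amount; combining this with the ever-present within-$J_1$ contribution $P(J_1)s_1^2 V=\tfrac{1}{612}$ and with an \eqref{eq1}-based lower bound on the cost of the leftmost captured cylinder $J_{2\gs 1}$, one assembles a total exceeding $V_2$ uniformly in $\gs$. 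Turning these pieces into a clean uniform estimate---ruling out every gap simultaneously rather than one word $\gs$ at a time---is where the bulk of the technical work lies.
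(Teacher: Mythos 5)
There is a genuine gap, and it stems from a strategic choice at the outset: you use only the weak monotonicity bound $V_n \leq V_2 = \frac{13}{612} \approx 0.0212$ for all $n \geq 2$. The paper's proof dispatches $n=2,3$ by citing Lemmas~\ref{lemma11} and~\ref{lemma12} directly, and for $n \geq 4$ first records the much sharper bound $V_n \leq V_4 \leq \frac{421}{156672} \approx 0.00269$, obtained by evaluating the distortion of the explicit four-point set $\set{a(1),a(21),a(221),a(222)}$. That upgraded bound is what makes the rest collapse: the hypothesis $a_1 \geq \frac14$ already gives $V_n \geq \int_{J_1}(x-\frac14)^2\,dP = \frac{11}{3264} \approx 0.00337 > V_4 \geq V_n$, a contradiction, with no sub-cases and no gap-by-gap conditional-expectation computation. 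Against $V_2 \approx 0.0212$ this comparison fails, which is precisely why you are forced into the ``substantive subcase'' $a_1 \in (\frac14, \frac12)$ --- and you correctly observe that you cannot close it uniformly over all gaps $(S_{2\gs1}(1), S_{2\gs2}(0))$. That unclosed subcase is the gap.

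The same issue recurs in your $J_1$-side Voronoi claim, where the paper has a one-line argument you miss. If the Voronoi region of the largest point $a_j$ of $\ga_n\ii J_1$ meets $J_2$, then $\frac12(a_j+a_{j+1}) > \frac12$, so $a_{j+1} > 1-a_j \geq \frac34 = S_{22}(0)$; this pushes $a_{j+1}$ past the whole of $J_{21}$, and a single fixed-cylinder estimate gives $V_n \geq \int_{J_{21}}(x-\frac34)^2\,dP = \frac{3}{544} \approx 0.0055 > V_4 \geq V_n$, with no reference to which particular gap the Voronoi boundary falls in. Once more this only works because $\frac{3}{544}$ exceeds $V_4$, not $V_2$. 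Your $J_2$-side Voronoi argument and your proof that $\ga_n\ii J_2 \neq \es$ are both correct and essentially the paper's (the paper uses $a_{k-1}\geq 0$ where you argue $a_{k-1}>0$; either suffices). But both places where you defer to an unwritten uniform estimate over infinitely many gaps are exactly where the paper has a clean finite argument, and the missing ingredient in both is the tighter bound $V_n \leq V_4$ for $n\geq 4$ together with the explicit $n=2,3$ lemmas for the base cases.
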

\begin{proof} By Lemma~\ref{lemma11} and Lemma~\ref{lemma12}, the proposition is true for $n=2$ and $n=3$. We now prove that the proposition is true for $n\geq 4$. Let $\ga_n:=\set{0\leq a_1<a_2<\cdots<a_n\leq 1}$ be an optimal set of $n$-means for $n\geq 4$. Consider the set of four points given by $\gb:=\set{a(1), a(21), a(221), a(222)}$. Then,
\begin{align*}
\int\min_{a \in \gb}(x-a)^2 dP& =\int_{J_1}(x-a(1))^2 dP+\int_{J_{21}}(x-a(21))^2 dP+\int_{J_{221}}(x-a(221))^2 dP\\&+\int_{J_{222}}(x-a(222)^2dP=\frac{421}{156672}=0.00268714.
\end{align*}
Since $V_n$ is the quantization error for $n$-means for $n\geq 4$, we have $V_n\leq V_4\leq 0.00268714$. If $a_1\geq \frac 14$, then
\[V_n\geq \int_{J_1}(x-\frac 14)^2 dP=\frac{11}{3264}=0.0033701>V_n,\]
which leads to a contradiction. So, we can assume that $a_1<\frac 14$. If $a_n\leq \frac 12$, then
\[V_n\geq \int_{J_2}(x-\frac 12)^2 dP=\frac {7}{68}=0.102941>V_n,\]
which yields a contradiction, and so $\frac 12<a_n$. This completes the proof of the first part of the proposition. To complete the proof of the proposition, let $j:=\max \set{i : a_i\leq \frac 14}$. Then, $a_j\leq \frac 14$. Suppose that the Voronoi region of $a_j$ contains points from $J_2$. Then, $\frac 12(a_j+a_{j+1})>\frac 12$ implying $a_{j+1}>1-a_j\geq 1-\frac 14=\frac 34=S_{22}(0)$, and so
\[V_n\geq \int_{J_{21}}\min_{a\in \ga_n}(x-a)^2 dP=\int_{J_{21}}\min_{a\in \set{a_{j}, a_{j+1}}} (x-a)^2 dP\geq \int_{J_{21}}(x-\frac 34)^2 dP=\frac 3{544}=0.00551471>V_n,\]
which is a contradiction. Next, let $k=\min\set{i : a_i\geq \frac 12}$ implying $\frac 12\leq a_k$. Assume that the Voronoi region of $a_k$ contains points from $J_1$. Then, $\frac 12(a_{k-1}+a_k)<\frac 14$ implying $a_{k-1}<\frac 12-a_k\leq \frac 12-\frac 12=0$, which is a contradiction as $0\leq a_1\leq a_2<\cdots <a_n\leq 1$. Thus, the proof of the proposition is complete.
\end{proof}

Let us now state the following proposition. Due to technicality, we do not show the proof of it in the paper.
\begin{prop} \label{prop30}
Let $\ga_n$ be an optimal set of $n$-means for any $n\geq 2$. Then, $\ga_n$ does not contain any point from the open interval $(\frac 14, \frac 12)$.
\end{prop}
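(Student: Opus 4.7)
The plan is to argue by strong induction on $n$: the base cases $n \in \{2, 3, 4\}$ are precisely Lemmas~\ref{lemma11}, \ref{lemma12}, and \ref{lemma14}. Fix $n \geq 5$, assume the proposition for every $m$ with $2 \leq m \leq n-1$, and note that by Proposition~\ref{prop29} this already gives
\[V_m \;=\; \min_{1 \leq j \leq m-1}\Bigl(\tfrac{1}{64} V_j + \tfrac{3}{16} V_{m-j}\Bigr)\]
for every such $m$. Let $\ga_n = \{a_1 < a_2 < \cdots < a_n\}$ be optimal and suppose, for contradiction, that some $a_\ell \in (\tfrac 14, \tfrac 12)$. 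Proposition~\ref{prop28} forces $a_1 \leq \tfrac 14$ and $a_n \geq \tfrac 12$, so $1 < \ell < n$.

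Write $M_\ell := M(a_\ell\mid \ga_n)$, $M_\ell^{(k)} := M_\ell \ii J_k$, and $p_k := P(M_\ell^{(k)})$ for $k = 1, 2$. Because $P$ assigns no mass to $(\tfrac 14, \tfrac 12)$ and $a_\ell$ equals the $P$-conditional mean of $X$ on $M_\ell$ (Proposition~\ref{prop0}(iii)), both $p_1, p_2 > 0$: otherwise $a_\ell$ would coincide with a conditional mean supported entirely in $J_1$ or entirely in $J_2$, contradicting $a_\ell \in (\tfrac 14, \tfrac 12)$. Since $(x-a_\ell)^2 \geq (a_\ell - \tfrac 14)^2$ for $x \in M_\ell^{(1)}$ and $(x-a_\ell)^2 \geq (\tfrac 12 - a_\ell)^2$ for $x \in M_\ell^{(2)}$, the Voronoi-region distortion at $a_\ell$ satisfies
\[\int_{M_\ell}(x - a_\ell)^2\,dP \;\geq\; p_1\bigl(a_\ell - \tfrac 14\bigr)^2 \;+\; p_2\bigl(\tfrac 12 - a_\ell\bigr)^2.\]
On the other hand, the induction hypothesis together with Proposition~\ref{prop29} provides the matching upper bound
\[V_n \;\leq\; \min_{1 \leq j \leq n-1}\Bigl(\tfrac{1}{64} V_j + \tfrac{3}{16} V_{n-j}\Bigr),\]
obtained by pushing forward optimal smaller sets via $S_1$ and $S_2$; by induction each pushforward misses $(\tfrac 14, \tfrac 12)$, so Proposition~\ref{prop29} applies.

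The contradiction is to be extracted by combining these two bounds, and the main obstacle is quantitative control on $p_1$ and $p_2$: in principle the Voronoi region $M_\ell$ could be narrow, making the direct lower bound too weak. To close the gap, I would exploit the conditional-mean identity $a_\ell = (p_1 \bar x_1 + p_2 \bar x_2)/(p_1+p_2)$, with $\bar x_k := E(X \mid X \in M_\ell^{(k)})$ satisfying $\bar x_1 \leq \tfrac 14 < \tfrac 12 \leq \bar x_2$, which forces a quantitative balance between $p_1$ and $p_2$ depending on where $a_\ell$ sits inside $(\tfrac 14, \tfrac 12)$. If this direct route is insufficient, a complementary strategy is an explicit improving-replacement argument: move $a_\ell$ to the conditional mean of the heavier side $J_k$ ($p_k = \max(p_1,p_2)$), reassign the lighter side $M_\ell^{(3-k)}$ to its nearest surviving neighbor in $\ga_n$, and then invoke the induction hypothesis on the re-optimized $n$-point configuration (which lies outside $(\tfrac 14, \tfrac 12)$) via Proposition~\ref{prop29} to detect a strict decrease in total distortion---in the spirit of the word-by-word Voronoi case analysis used in the proof of Lemma~\ref{lemma11}. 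Either route contradicts the optimality of $\ga_n$ and closes the induction.
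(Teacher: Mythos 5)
Your proposal correctly sets up the base cases, correctly observes that an $a_\ell\in(\tfrac14,\tfrac12)$ must pull mass from \emph{both} $J_1$ and $J_2$ (so $p_1,p_2>0$), and correctly identifies the crux; but it stops exactly where the proof has to be done. The gap you name is real and your two proposed escape routes do not close it. The identity $a_\ell=(p_1\bar x_1+p_2\bar x_2)/(p_1+p_2)$ with $\bar x_1\le\tfrac14$ and $\bar x_2\ge\tfrac12$ constrains only the \emph{ratio} $p_1/p_2$, not the magnitudes: the Voronoi cell of $a_\ell$ can be pinched arbitrarily tightly between its neighbors, making $p_1+p_2$ as small as you like while keeping $a_\ell$ fixed, so $p_1(a_\ell-\tfrac14)^2+p_2(\tfrac12-a_\ell)^2$ carries no uniform lower bound. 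The ``improving replacement'' alternative is stated only as an intention; moving $a_\ell$ to $\bar x_k$ and dumping $M_\ell^{(3-k)}$ onto the nearest surviving code point is not obviously a strict improvement (the light side may be sent much farther away), and you give no estimate showing the trade is favorable. So as written there is no contradiction, only a plan.

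The paper avoids this difficulty by not looking at $M_\ell$ at all. It handles $n=2,\dots,6$ by direct case analysis (as in Lemma~\ref{lemma14}), and for $n\ge7$ it uses a single explicit test set of seven points to get the uniform bound $V_n\le V_7\le\frac{1483}{2506752}\approx 0.000591602$. Then, with $a_j$ the largest code point below $\tfrac12$, the argument is a dichotomy on $a_j$. If $\tfrac14<a_j\le\tfrac38$, then because $a_j>\tfrac14$ and $P$ puts no mass on $(\tfrac14,\tfrac12)$, the Voronoi cell of $a_j$ must reach into $J_2$, which forces $a_{j+1}>1-a_j\ge\tfrac58$; hence every $x$ in the whole cylinder $J_{21}=[\tfrac12,\tfrac58]$ is at distance $\ge\tfrac58-x$ from every code point, and $\int_{J_{21}}(x-\tfrac58)^2\,dP=\frac{11}{17408}\approx0.000632>V_n$, a contradiction. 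If $\tfrac38\le a_j<\tfrac12$ the symmetric argument forces $a_{j-1}<\tfrac18$ and makes the error on $J_{12}$ alone exceed $V_n$. The key structural idea that your draft is missing is precisely this shift of viewpoint: instead of bounding the error \emph{inside} the offending Voronoi cell (whose mass you cannot control), bound the error on a fixed, positive-mass cylinder that the misplaced code point leaves uncovered. That gives an absolute lower bound independent of $p_1,p_2$, which is what makes the comparison with the upper bound $V_n\le V_7$ decisive.
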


The following proposition plays an important role in the paper.

\begin{prop}\label{prop31}
Let $\ga_n$ be an optimal set of $n$-means for $n\geq 2$. Then, for $c\in \ga_n$, we have $c=a(\gt)$ for some $\gt \in \set{1, 2}^\ast$.
\end{prop}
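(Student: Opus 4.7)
The plan is to proceed by strong induction on $n$, starting from $n=1$ so that both halves of the forthcoming decomposition can be treated uniformly. For $n=1$, Note~\ref{note1} identifies the unique optimal one-mean with $E(X) = \frac 23 = S_\emptyset(\frac 23) = a(\emptyset)$, and $\emptyset \in \set{1,2}^\ast$, so the claim holds. For $n=2$, Lemma~\ref{lemma11} gives $\ga_2 = \set{a(1), a(2)}$ explicitly. These serve as the base cases.

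For the inductive step, fix $n \geq 3$ and assume the claim holds for all $m < n$. Let $\ga_n$ be an optimal set of $n$-means. By Proposition~\ref{prop30}, $\ga_n$ contains no point of $(\frac 14, \frac 12)$, and by Proposition~\ref{prop28} the sets $\gb_1 := \ga_n \ii J_1$ and $\gb_2 := \ga_n \ii J_2$ are both nonempty with $\ga_n = \gb_1 \uu \gb_2$. Writing $k := \te{card}(\gb_1)$, we have $1 \leq k \leq n-1$. Proposition~\ref{prop29} then says that $S_1^{-1}(\gb_1)$ is an optimal set of $k$-means and $S_2^{-1}(\gb_2)$ is an optimal set of $(n-k)$-means. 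Since $k, n-k < n$, the inductive hypothesis produces, for each point in $S_1^{-1}(\gb_1)$ and each point in $S_2^{-1}(\gb_2)$, a word $\gt \in \set{1,2}^\ast$ expressing it as $a(\gt)$.

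To finish, push these expressions back through $S_1$ and $S_2$. Using the identity $a(\gt) = S_\gt(\frac 23)$ from Note~\ref{note1},
\[
S_j(a(\gt)) = S_j\!\left(S_\gt(\tfrac 23)\right) = S_{j\gt}(\tfrac 23) = a(j\gt) \qquad (j=1,2).
\]
Hence every element of $\gb_1 = S_1(S_1^{-1}(\gb_1))$ has the form $a(1\gt)$ for some $\gt$, and every element of $\gb_2$ has the form $a(2\gt)$, yielding the desired description of every $c \in \ga_n$.

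The only subtle point is ensuring the induction closes on both halves, which requires $1 \leq k \leq n-1$; this is precisely what Proposition~\ref{prop28} delivers. Beyond that, the argument is pure bookkeeping: the real work (the structure of $\ga_n$ relative to $J_1$ and $J_2$, the absence of points in $(\frac 14, \frac 12)$, and the optimality of the rescaled pieces) has already been carried out in Propositions~\ref{prop28}, \ref{prop29}, and~\ref{prop30}.
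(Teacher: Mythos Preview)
Your proof is correct and follows essentially the same approach as the paper: both arguments rely on Propositions~\ref{prop28}, \ref{prop29}, and~\ref{prop30} to split $\ga_n$ into its pieces on $J_1$ and $J_2$ and then reduce via the self-similarity. The only cosmetic difference is organizational---you run a clean strong induction on $n$, whereas the paper tracks a single point $c$ and descends through nested intervals $J_\gt$ until it is isolated; your formulation is arguably tidier since termination is built into the induction rather than left implicit in the descent.
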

\begin{proof}
Let $\ga_n$ be an optimal set of $n$-means for $n\geq 2$ and $c\in \ga_n$. By Proposition~\ref{prop28}, and Proposition~\ref{prop30}, we see that  either $c\in \ga_n\ii J_1$ or $c\in \ga_n\ii J_2$. Without any loss of generality, we can assume that  $c\in \ga_n\ii J_1$. If $\te{card}(\ga_n\ii J_1)=1$, then by Proposition~\ref{prop29}, $S_1^{-1}(\ga_n\ii J_1)$ is an optimal set of one-mean yielding $c=S_1(\frac 23)=a(1)$. Assume that $\te{card}(\ga_n\ii J_1)\geq 2$. Then, as similarity mappings preserve the ratio of the distances of a point from any other two points, using Proposition~\ref{prop29} again, we have $(\ga_n\ii J_1)\ii J_{11}=\ga_n\ii J_{11}\neq \es$ and $(\ga_n\ii J_1)\ii J_{12}=\ga_n\ii J_{12}\neq \es$, and $\ga_n\ii J_1$ does not contain any point from the open interval $(S_{11}(1), S_{12}(0))$ yielding the fact that $c\in (\ga_n\ii J_{11})\uu (\ga_n\ii J_{12})$. Without any loss of generality, assume that $c\in \ga_n\ii J_{11}$. If $\te{card}(\ga_n\ii J_{11})=1$, by Proposition~\ref{prop29} as before, we see that $S_{11}^{-1}(\ga_n\ii J_{11})$ is an optimal set of one-mean implying $c=S_{11}(\frac 23)$. If $\te{card}(\ga_n\ii J_{11})\geq 2$, then proceeding inductively in the similar way, we can find a word $\gt\in \set{1, 2}^\ast$ with $11\prec \gt$, such that $c\in \ga_n\ii J_\gt$ and $\te{card}(\ga_n\ii J_\gt)=1$, and then $S_\gt^{-1}(\ga_n\ii J_\gt)$ being an optimal set of one-mean for $P$, we have $c=S_\gt(\frac 23)=a(\gt)$. Thus, the proof of the proposition is yielded.
\end{proof}


We need the following lemma to prove the main theorem Theorem~\ref{Th1}.
\begin{lemma}\label{lemma16}
Let  $\gs, \tau \in \{1, 2\}^\ast$. Then
\begin{align*} P(J_{\gs1})(\lambda(J_{\gs 1}))^2&+P(J_{\gs2})(\lambda(J_{\gs 2}))^2+P(J_{\tau})(\lambda(J_{\tau}))^2\\
&< P(J_{\gs})(\lambda(J_{\gs}))^2+P(J_{\tau1})(\lambda(J_{\tau 1}))^2+P(J_{\tau2})(\lambda(J_{\tau 2}))^2
\end{align*}
if and only if  $P(J_\gs)(\lambda(J_\gs))^2 > P(J_\tau)(\lambda(J_\tau))^2$.
\end{lemma}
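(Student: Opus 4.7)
The plan is to introduce the shorthand $f(\sigma):=P(J_\sigma)(\lambda(J_\sigma))^2$ and show that the key identity
\[f(\sigma 1)+f(\sigma 2)=\frac{13}{64}\,f(\sigma)\]
holds for \emph{every} $\sigma\in\{1,2\}^\ast$ with the \emph{same} constant $\tfrac{13}{64}$. Once this is established, both sides of the claimed inequality can be rewritten in terms of $f(\sigma)$ and $f(\tau)$ only, and the equivalence will drop out by a one-line algebraic manipulation.

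For the identity, I would use the closed forms recorded in the Preliminaries, namely $P(J_\sigma)=\tfrac{3^{|\sigma|-c(\sigma)}}{4^{|\sigma|}}$ and $\lambda(J_\sigma)=\tfrac{1}{2^{|\sigma|+c(\sigma)}}$. Appending $1$ increases $|\sigma|$ by $1$ and $c(\sigma)$ by $1$, while appending $2$ increases $|\sigma|$ by $1$ and leaves $c(\sigma)$ unchanged. A direct calculation then gives
\[f(\sigma 1)=\tfrac{1}{4}\cdot\tfrac{1}{16}\,f(\sigma)=\tfrac{1}{64}\,f(\sigma),\qquad f(\sigma 2)=\tfrac{3}{4}\cdot\tfrac{1}{4}\,f(\sigma)=\tfrac{3}{16}\,f(\sigma),\]
and adding yields $f(\sigma 1)+f(\sigma 2)=\bigl(\tfrac{1}{64}+\tfrac{3}{16}\bigr)f(\sigma)=\tfrac{13}{64}f(\sigma)$, exactly as claimed. (Conceptually this constant is the same $\tfrac{13}{64}$ that appears as $V_2/V$ in Lemma~\ref{lemma11}, reflecting the self-similar way a parent interval's one-point distortion splits among its two children.)

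With the identity in hand, substitute into the inequality in the statement. The left side becomes $\tfrac{13}{64}f(\sigma)+f(\tau)$, and the right side becomes $f(\sigma)+\tfrac{13}{64}f(\tau)$. The inequality
\[\tfrac{13}{64}f(\sigma)+f(\tau)<f(\sigma)+\tfrac{13}{64}f(\tau)\]
is equivalent, after transposing, to $\bigl(1-\tfrac{13}{64}\bigr)f(\tau)<\bigl(1-\tfrac{13}{64}\bigr)f(\sigma)$, i.e.\ $\tfrac{51}{64}f(\tau)<\tfrac{51}{64}f(\sigma)$. Since $\tfrac{51}{64}>0$, this is equivalent to $f(\sigma)>f(\tau)$, which is precisely the condition $P(J_\sigma)(\lambda(J_\sigma))^2>P(J_\tau)(\lambda(J_\tau))^2$. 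This gives both directions at once.

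There is no real obstacle: the only thing to be careful about is to not confuse the exponent bookkeeping with $c(\sigma)$, and to recognise that the ratio $f(\sigma 1)/f(\sigma)+f(\sigma 2)/f(\sigma)$ is truly independent of $\sigma$ (this is the whole point; if the two similarity ratios and weights had been chosen differently, the constants would depend on the last letter of $\sigma$, and the clean equivalence would fail). Everything else is a two-term algebraic rearrangement.
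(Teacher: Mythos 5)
Your proof is correct and takes essentially the same route as the paper: both use the one-step recursions $P(J_{\eta 1})=\tfrac14 P(J_\eta)$, $P(J_{\eta 2})=\tfrac34 P(J_\eta)$, $\lambda(J_{\eta 1})=\tfrac14\lambda(J_\eta)$, $\lambda(J_{\eta 2})=\tfrac12\lambda(J_\eta)$ to reduce the claimed inequality to the sign of $P(J_\sigma)\lambda(J_\sigma)^2-P(J_\tau)\lambda(J_\tau)^2$ via the factor $\tfrac{51}{64}=1-\tfrac{13}{64}$. Your shorthand $f(\sigma)$ just packages the identical computation a bit more cleanly.
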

\begin{proof}
For any $\eta \in \{1, 2\}^*$, we have
$
P(J_{\eta 1})=\frac 1 4 P(J_\eta)$, $P(J_{\eta 2}) =\frac 3 4 P(J_\eta)$, $\lambda(J_{\eta 1})=\frac 1 4 \lambda(J_\eta)$,  \te{ and } $\lambda(J_{\eta 2}) =\frac 1 2 \lambda(J_\eta)$.
Then, for $\gs, \tau \in \{1, 2\}^\ast$,
\begin{align*}
&\left(P(J_{\gs1})(\lambda(J_{\gs 1}))^2  +P(J_{\gs2})(\lambda(J_{\gs 2}))^2+P(J_{\tau})(\lambda(J_{\tau}))^2 \right)\\
&\qquad \qquad \qquad  -\left(P(J_{\gs})(\lambda(J_{\gs}))^2+P(J_{\tau1})(\lambda(J_{\tau 1}))^2+P(J_{\tau2})(\lambda(J_{\tau 2}))^2\right)\\
&=\left(\frac 1{64} P(J_{\gs})(\lambda(J_{\gs}))^2  + \frac 3{16} P(J_{\gs})(\lambda(J_{\gs}))^2+P(J_{\tau})(\lambda(J_{\tau}))^2 \right)\\
&\qquad \qquad  \qquad -\left(P(J_{\gs})(\lambda(J_{\gs}))^2+\frac 1{64} P(J_{\tau})(\lambda(J_{\tau}))^2+ \frac{3}{16}P(J_{\tau})(\lambda(J_{\tau}))^2\right)\\
&=\frac 1 {64} \left(P(J_{\gs})(\lambda(J_{\gs}))^2-P(J_\tau)(\lambda(J_\tau))^2\right)+\frac 3 {16} \left(P(J_{\gs})(\lambda(J_{\gs}))^2-P(J_\tau)(\lambda(J_\tau))^2\right)\\
&\qquad \qquad \qquad -\left(P(J_{\gs})(\lambda(J_{\gs}))^2-P(J_\tau)(\lambda(J_\tau))^2\right)\\
&=-\frac{51}{64}\left(P(J_{\gs})(\lambda(J_{\gs}))^2-P(J_\tau)(\lambda(J_\tau))^2\right),
\end{align*}
and thus, the lemma follows.
\end{proof}

Due to Proposition~\ref{prop31} and Lemma~\ref{lemma16}, we are now ready to state and prove the following theorem, which gives the induction formula to determine the optimal sets of $n$-means and the $n$th quantization errors for all $n\geq 2$.
\begin{theorem} \label{Th1} Let $n\in \D N$ and $n\geq 2$. Let $\ga_n$ be an optimal set of $n$-means, i.e., $\ga_n \in\C C_n:= \mathcal{C}_n(P)$. Set
$
O_n(\ga_n):=\{\gs \in \{1, 2\}^* : S_\gs(\frac 23 ) \in \ga_n\}$, and
\[\hat O_n(\ga_n):=\{\tau \in O_n(\ga_n) : P(J_{\tau})\left(\lambda(J_\tau)\right)^2 \geq P(J_{\gs})\left(\lambda(J_\gs)\right)^2 \text{ for all } \gs \in O_n(\ga_n)\}.\]
 Take any $\tau \in \hat O_n(\ga_n)$.
Then, $\ga_{n+1}(\tau):=\{S_{\gs}(\frac 23 ): \gs \in (O_n(\ga_n)\setminus \{\tau\})\} \cup \{S_{\tau 1}(\frac 23 ), \,S_{\tau 2}(\frac 23 )\}$ is an optimal set of $(n+1)$-means for $P$, and the number
of such sets is given by
\[\text{card}\Big(\bigcup_{\ga_n \in \mathcal{C}_n}\{\ga_{n+1}(\tau) : \tau \in \hat O_n(\ga_n)\}\Big).\]
Moreover, the $n$th quantization error is given by
\[V_n=\sum_{\gs \in O_n(\ga_n)} P(J_{\gs})(\lambda(J_{\gs}))^2\,V=\sum_{\gs\in O_n} \frac{3^{|\gs|-c(\gs)}}{2^{4|\gs|+2c(\gs)}}\,V.\]
\end{theorem}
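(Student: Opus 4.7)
The plan is to combine the structural results of Propositions~\ref{prop28}--\ref{prop31} with the exchange inequality in Lemma~\ref{lemma16}, arguing by induction on $n$; the base case $n=2$ is already settled by Lemmas~\ref{lemma11} and~\ref{lemma12}.

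First I would establish the error formula. By Proposition~\ref{prop31} every point of an optimal $\ga_n$ has the form $a(\gs)=S_\gs(\tfrac{2}{3})$; iterating Proposition~\ref{prop30} through the self-similar structure (apply $S_\gs^{-1}$ to zoom into $J_\gs$ and reuse the proposition) shows that $O_n(\ga_n)$ is a \emph{leaf antichain}: the intervals $\set{J_\gs:\gs\in O_n(\ga_n)}$ partition the Cantor support up to a $P$-null set, and the Voronoi region of $a(\gs)$ on $C$ is $J_\gs\cap C$. Taking $a=a(\gs)$ kills the second term of~\eqref{eq1}, giving $\int_{J_\gs}(x-a(\gs))^2\,dP=p_\gs s_\gs^2 V=P(J_\gs)(\lambda(J_\gs))^2 V$; summing over $\gs\in O_n(\ga_n)$ yields the stated formula for $V_n$. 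Next I would evaluate the candidate $\ga_{n+1}(\tau)$ for $\tau\in\hat O_n(\ga_n)$: the new index set $(O_n(\ga_n)\setminus\{\tau\})\cup\{\tau 1,\tau 2\}$ is again a leaf antichain, and by self-similarity $\{a(\tau 1),a(\tau 2)\}$ is the optimal two-mean set of $P$ transported into $J_\tau$ (Lemma~\ref{lemma11}), so $J_{\tau 1}$ and $J_{\tau 2}$ become the Voronoi regions of the two new points inside $J_\tau$, while every other Voronoi region is unchanged. Writing $f(\gs):=P(J_\gs)(\lambda(J_\gs))^2$, the formula above then yields
\[V(\ga_{n+1}(\tau))=V_n-f(\tau)V+\bigl(f(\tau 1)+f(\tau 2)\bigr)V=V_n-\tfrac{51}{64}f(\tau)V,\]
the last equality being the identity from the proof of Lemma~\ref{lemma16}.

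For optimality, take any optimal $\ga^*_{n+1}=\set{a(\gs):\gs\in O^*}$. Since $|O^*|\geq 3$, the underlying tree has at least one sibling-leaf pair $\{\eta 1,\eta 2\}\sci O^*$. Lemma~\ref{lemma16} applied to this optimal $O^*$ forces $f(\gs)\leq f(\eta)$ for every $\gs\in O^*\setminus\{\eta 1,\eta 2\}$: if some $\gs^\ast$ violated this, the swap $\{\gs^\ast,\eta 1,\eta 2\}\mapsto\{\gs^\ast 1,\gs^\ast 2,\eta\}$ would strictly decrease $F(O^*):=\sum_{\gs\in O^*}f(\gs)=V_{n+1}/V$, contradicting optimality. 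Hence $\eta$ maximizes $f$ on the collapsed antichain $O':=(O^*\setminus\{\eta 1,\eta 2\})\cup\{\eta\}$, and $F(O^*)=F(O')-\tfrac{51}{64}f(\eta)$. Combining this with the upper bound $V_{n+1}\leq V_n-\tfrac{51}{64}m_n V$ from the previous step (where $m_n$ is the common $f$-maximum of optimal $n$-antichains, well-defined by the inductive hypothesis), and choosing the sibling pair of $O^*$ to minimize $f(\eta)$, I would conclude that $O'$ is itself an optimal $n$-antichain with $f(\eta)=m_n$, whence $\ga^*_{n+1}=\ga_{n+1}(\eta)$ arises from the stated construction, $V_{n+1}=V_n-\tfrac{51}{64}m_n V$, and the inverse correspondence $\ga^*_{n+1}\mapsto(\ga_n,\eta)$ produces the claimed cardinality.

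The main obstacle is precisely the combinatorial claim just used: that \emph{some} sibling-leaf pair of $O^*$ collapses to an optimal $n$-antichain. My plan is to choose the pair with the smallest parent $f$-value, and to argue via repeated applications of Lemma~\ref{lemma16} that if every collapse were strictly sub-optimal one could construct a size-$(n+1)$ leaf antichain whose $F$-sum is strictly less than $V_{n+1}/V$, contradicting optimality of $\ga^*_{n+1}$. Once this claim is secured, substituting the closed forms $P(J_\gs)=3^{|\gs|-c(\gs)}/4^{|\gs|}$ and $\lambda(J_\gs)=1/2^{|\gs|+c(\gs)}$ from the preliminaries converts the error sum into the stated expression $V_n=\sum_{\gs\in O_n}\frac{3^{|\gs|-c(\gs)}}{2^{4|\gs|+2c(\gs)}}V$.
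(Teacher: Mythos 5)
Your overall strategy mirrors the paper's: establish the error formula $V_n=\sum_{\gs\in O_n}f(\gs)V$ with $f(\gs):=P(J_\gs)\lambda(J_\gs)^2$, compute the one-split change $V(\ga_{n+1}(\tau))=V_n-\tfrac{51}{64}f(\tau)V$, and use Lemma~\ref{lemma16} to identify the best $\tau$. That much is correct and is exactly what the paper does, if more tersely.

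The gap is the one you honestly flag in your last paragraph, and it is genuine. Your chain of inequalities in the optimality direction gives $F(O')-F(O_n)\leq\tfrac{51}{64}\bigl(f(\eta)-m_n\bigr)$, and from $F(O')\geq F(O_n)$ you can only conclude $f(\eta)\geq m_n$; nothing forces $F(O')=F(O_n)$ or $f(\eta)=m_n$. In other words, a priori the optimal $(n+1)$-antichain could collapse to a sub-optimal $n$-antichain whose unusually large $f(\eta)$ compensates, and ``choose the sibling pair with smallest parent $f$-value'' is a plausible heuristic but is not yet an argument. (For what it is worth, the paper's own proof of Theorem~\ref{Th1} asserts this step in one sentence without supplying the missing comparison to \emph{arbitrary} $(n+1)$-point antichains either, so you have correctly located the soft spot in the argument rather than introduced a new one.)

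There is a clean way to close the gap that you are already halfway to. Write $I(O)$ for the set of internal nodes of the finite binary tree whose leaves are $O$. Since $f(\gs 1)+f(\gs 2)=\tfrac{13}{64}f(\gs)$, iterating the split identity from the root gives
\[
F(O)\;=\;f(\es)\;-\;\tfrac{51}{64}\sum_{\gs\in I(O)}f(\gs),
\]
where $\es$ is the empty word. So minimizing $F$ over $(n+1)$-leaf antichains is the same as maximizing $\sum_{\gs\in I}f(\gs)$ over down-closed sets $I$ of $n$ nodes (``down-closed'' meaning closed under taking predecessors, the condition for $I$ to be the internal-node set of a subtree). Now $f$ is \emph{strictly decreasing} along any branch, since $f(\gs i)<f(\gs)$ for $i=1,2$; consequently any set of $n$ nodes with the $n$ largest $f$-values (ties broken arbitrarily) is automatically down-closed, and these are precisely the maximizers. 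The greedy ``split the max-$f$ leaf'' adds to $I$ exactly the largest-$f$ node whose parent is already in $I$, i.e.\ the next-largest available $f$-value, so by induction the greedy builds precisely these maximizing $I$'s. This proves both that your $\ga_{n+1}(\tau)$ with $\tau\in\hat O_n(\ga_n)$ is optimal \emph{and} that every optimal $(n+1)$-antichain arises as such a split, which also delivers the cardinality statement. If you splice this in place of the last paragraph, the rest of your argument (the error formula from Propositions~\ref{prop29}--\ref{prop31} and equation~\eqref{eq1}, and the Lemma~\ref{lemma16} exchange step) goes through as written.
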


\begin{proof}
For $n\geq 2$, let $\ga_n$ be an optimal set of $n$-means for $P$. Then, $\ga_{n+1}$ is an optimal set of $(n+1)$-means. Let $\te{card}(\ga_{n+1}\ii J_1)=n_1$ and $\te{card}(\ga_{n+1}\ii J_2)=n_2$. Then, by Proposition~\ref{prop29}, Proposition~\ref{prop28}, and Proposition~\ref{prop30}, $S_1^{-1}(\ga_{n+1}\ii J_1)$ is an optimal set of $n_1$-means implying $\ga_{n+1}\ii J_1=S_1(\ga_{n_1})$.  Similarly, $\ga_{n+1}\ii J_2=S_2(\ga_{n_2})$. Thus, we see that $\ga_{n+1}=S_1(\ga_{n_1})\uu S_2(\ga_{n_2})$. Again, by Proposition~\ref{prop31}, we see that for any $c\in \ga_n$, $c=a(\gs)$ for some $\gs\in \set{1, 2}^\ast$. Hence, we can conclude that
\begin{equation} \label{eq0001} \ga_{n+1}\sci \uu\set{J_\gs : a(\gs) \in \ga_n}\sci \uu \set{J_\gs : a(\gs)\in \ga_{n-1}}\sci \cdots \sci J_1\uu J_{21}\uu J_{22}\sci J_1\uu J_2. \end{equation}
Write $
O_n(\ga_n):=\{\gs \in \{1, 2\}^* : S_\gs(\frac 23 ) \in \ga_n\}$, and
$\hat O_n(\ga_n):=\{\tau \in O_n(\ga_n) : P(J_{\tau})\left(\lambda(J_\tau)\right)^2 \geq P(J_{\gs})\left(\lambda(J_\gs)\right)^2 \text{ for all } \gs \in O_n(\ga_n)\}.$
If $\tau \not \in \hat O_n(\ga_n)$, i.e., if  $\tau \in O_n(\ga_n)\setminus \hat O_n(\ga_n)$, then by Lemma~\ref{lemma16}, the error \[\int\min_{\gs\in(O_n(\ga_n)\setminus \{\tau\})\cup \{\tau 1, \tau 2\}}(x-S_\gs(\frac 2 3))^2 dP\] obtained in this case is strictly greater than the corresponding error obtained in the case where $\tau\in \hat O_n(\ga_n)$. Hence, by the relation~\eqref{eq0001}, we can say that for any $\tau \in \hat O_n(\ga_n)$, the set $\ga_{n+1}(\tau)=\{S_{\gs}(\frac 23 ): \gs \in (O_n(\ga_n)\setminus \{\tau\})\} \cup \{S_{\tau 1}(\frac 23 ), \,S_{\tau 2}(\frac 23 )\}$ is an optimal set of $(n+1)$-means for $P$, and the number
of such sets equals $\text{card}\Big(\bigcup_{\ga_n \in \mathcal{C}_n(P)}\{\ga_{n+1}(\tau) : \tau \in \hat O_n(\ga_n)\}\Big)$. Moreover, the $n$th quantization error is given by
\begin{align*}
V_n&=\int\min_{\gs \in O_n(\ga_n)}\Big(x-S_\gs(\frac 2 3 )\Big)^2dP=\sum_{\gs \in O_n(\ga_n)} \int_{J_\gs} \Big(x-S_\gs(\frac 2 3 )\Big)^2dP\\
&=\sum_{\gs \in O_n(\ga_n)} P(J_{\gs})(\lambda(J_{\gs}))^2\,V=\sum_{\gs \in O_n(\ga_n)}\frac{3^{|\gs|-c(\gs)}}{4^{|\gs|}}\Big(\frac 1{2^{|\gs|+c(\gs)}}\Big)^2\,V=\sum_{\gs\in O_n(\ga_n)} \frac{3^{|\gs|-c(\gs)}}{2^{4|\gs|+2c(\gs)}}\,V.
\end{align*}
Thus, the proof of the theorem is complete.
\end{proof}

\begin{remark}
For the probability distribution supported by the nonuniform Cantor distribution, considered in this paper, to obtain an optimal set of $(n+1)$-means one needs to know an optimal set of $n$-means. A closed formula is not known yet. Further investigation in this direction is still awaiting.
\end{remark}

Using the induction formula given by Theorem~\ref{Th1}, we obtain some results and observations about the optimal sets of $n$-means, which are given in the following section.

\begin{figure}
\centerline{\includegraphics[ ]{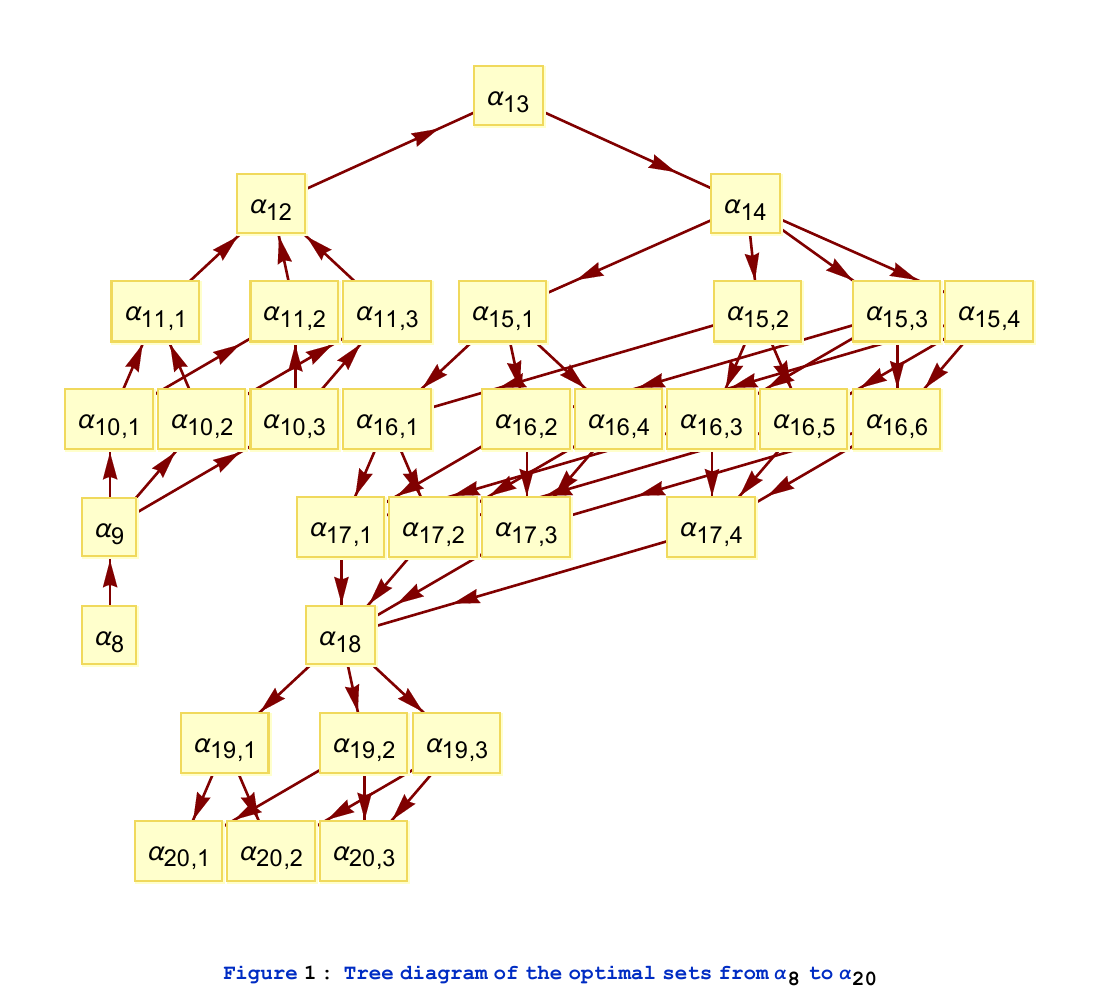}}
\end{figure}

\section{Some results and observations}  \label{sec4}

 Let $\ga_n$ be an optimal set of $n$-means, i.e., $\ga_n \in \C C_n$, and then for any $a\in \ga_n$, we have $a:=a(\gs)=S_\gs(\frac 23)$ for some $\gs:=\gs_1\gs_2\cdots\gs_k \in \set{1, 2}^k$, $k\geq 1$. Moreover, $a$ is the conditional expectation of the random variable $X$ given that $X$ is in $J_\gs$, i.e., $a=S_\gs(\frac 2 3)=E(X : X \in J_\gs)$.
If $\te{card}(\C C_n)=k$ and  $\te{card}(\C C_{n+1})=m$, then either $1\leq k\leq m$, or $1\leq m\leq k$ (see Table~\ref{tab1}). Moreover, by Theorem~\ref{Th1}, we see that an optimal set at stage $n$ can generate multiple distinct optimal sets at stage $n+1$, and multiple distinct optimal sets at stage $n$ can produce one common optimal set at stage $n+1$; for example from Table~\ref{tab1}, we see that the number of $\ga_9=1$, the number of $\ga_{10}=3$, the number of $\ga_{11}=3$, and the number of $\ga_{12}=1$. By $\ga_{n, i} \rightarrow \ga_{n+1, j}$, it is meant that the optimal set $\ga_{n+1, j}$ at stage $n+1$ is produced from the optimal set $\ga_{n, i}$ at stage $n$, similar is the meaning for the notations $\ga_n\rightarrow \ga_{n+1, j}$, or $\ga_{n, i} \rightarrow \ga_{n+1}$, for example from Figure~1:
 \begin{align*} &\left\{\ga _9\to \ga _{10,1},\ga _9\to \ga _{10,2}, \ga _9\to \ga _{10,3}\right\},\\
 &\left\{\left\{\ga _{10,1}\to \ga _{11,1},\ga _{10,1}\to \ga _{11,2}\right\}, \left\{\ga _{10,2}\to \ga _{11,1},\ga _{10,2}\to \ga _{11,3}\right\},\left\{\ga _{10,3}\to \ga _{11,2},\ga _{10,3}\to \ga _{11,3}\right\}\right\},\\
  &\left\{\ga _{11,1}\to \ga _{12},\ga _{11,2}\to \ga _{12},\ga _{11,3}\to \ga _{12}\right\}.
 \end{align*}
Moreover, we see that
\begin{align*}
\ga_9&=\{a(11),a(121),a(122),a(211),a(212),a(221),a(2221), a(22221),a(22222)\} \\
& \qquad \te{ with } V_9=\frac{9805}{40108032}=0.000244465;\\
\ga_{10, 1}&=\{a(11),a(121),a(122),a(211),a(212),a(2211),a(2212),a(2221), a(22221),a(22222)\};\\
\ga_{10, 2}&=\{a(11),a(121),a(122),a(211),a(221),a(2121),a(2122),a(2221), a(22221),a(22222)\};\\
\ga_{10, 3}&=\{a(11),a(121),a(211),a(212),a(221),a(1221),a(1222),a(2221), a(22221),a(22222)\} \\
& \qquad   \te{ with } V_{10}=\frac{7969}{40108032}=0.000198688;
\end{align*}
\begin{align*}
\ga_{11, 1}&=\{a(11),a(121),a(122),a(211),a(2121),a(2122),a(2211),a(2212), a(2221),a(22221), \\
& \qquad a(22222)\};\\
\ga_{11, 2}&=\{a(11),a(121),a(211),a(212),a(1221),a(1222),a(2211),a(2212), a(2221),a(22221), \\
& \qquad a(22222)\};\\
\ga_{11, 3}&=\{a(11),a(121),a(211),a(221),a(1221),a(1222),a(2121),a(2122), a(2221),a(22221), \\
& \qquad a(22222)\}\te{ with } V_{11}=\frac{6133}{40108032}=0.000152912;\\
\ga_{12}&=\{a(11),a(121),a(211),a(1221),a(1222),a(2121),a(2122),a(2211), a(2212),a(2221), \\
& \qquad a(22221),a(22222)\}\te{ with } V_{12}=\frac{4297}{40108032}=0.000107136; \\
\ga_{13}&=\{a(111),a(112), a(121),a(211),a(1221),a(1222),a(2121),a(2122), a(2211), a(2212), \\
& \qquad a(2221),a(22221),a(22222)\}\\
& \text{ with } V_{13}=\frac{3481}{40108032}=0.0000867906;
\end{align*}
and so on.
\begin{table}
\begin{center}
\begin{tabular}{ |c|c||c|c|| c|c||c|c|c||c|c||c|c}
 \hline
$n$ & $\te{card}(\C C_n) $ & $n$ & $\te{card}(\C C_n) $  & $n$ & $\te{card}(\C C_n)  $  & $n$ & $\te{card}(\C C_n)$   & $n$ & $\te{card}(\C C_n)$ & $n$ & $\te{card}(\C C_n)$\\
 \hline
5 & 1 & 18 &  1 &  31& 15 & 44 & 120& 57 & 7& 70 & 6435 \\6 & 1 &  19 & 3 & 32 & 6 & 45& 210 & 58 & 21& 71 & 6435\\7 & 2 & 20  & 3 & 33&  1& 46 & 252 & 59 & 35& 72 & 5005\\8 & 1 &  21 &  1& 34 & 1& 47& 210 & 60& 35 & 73 & 3003\\9 & 1 & 22&  1 & 35 &  1& 48 & 120& 61 & 21 & 74 & 1365\\10 & 3 & 23 & 5 & 36 & 6 & 49 & 45& 62 & 7& 75 & 455 \\11 & 3 & 24 &  10 & 37 & 15& 50  & 10& 63 & 1 & 76 & 105 \\12 & 1 & 25 & 10 & 38 & 20 & 51 & 1 & 64& 15& 77 & 15\\13 & 1 & 26 & 5 & 39& 15& 52 & 1& 65 & 105& 78 & 1 \\14 & 1 & 27& 1 & 40 & 6& 53 & 4& 66  & 455 & 79 & 1\\15 & 4 & 28&  6 & 41 &1 & 54 & 6& 67 & 1365 & 80 & 10  \\16 & 6 &29 & 15 & 42 & 10 & 55 & 4& 68 & 3003& 81 & 45 \\17 & 4 &30 & 20 & 43&  45 & 56 & 1& 69 & 5005 & 82 & 120\\
 \hline
\end{tabular}
 \end{center}
 \
\caption{Number of $\ga_n$ in the range $5\leq n\leq 82$.}
    \label{tab1}
\end{table}

\section{Quantizers for nonuniform Cantor distributions with Golden ratio} \label{sec5}
Let $\psi$ be the square root of the Golden ratio $\frac 12 (\sqrt 5-1)$, i.e., $\psi=\sqrt{\frac 12 (\sqrt 5-1)}$. Then, $\psi^2+\psi^4=1$. In this section, take $P:=\psi^2 P\circ S_1^{-1}+\psi^4 P\circ S_2^{-1}$ and $S_1(x)=\frac 13x$, and $S_2(x)=\frac 13x+\frac 23$ for $x\in \D R$. Then, $P$ is a nonuniform Cantor distribution supported by the classical Cantor set $C$ generated by $S_1$ and $S_2$ associated with the probability vector $(\psi^2, \psi^4)$. For this probability measure in this section, we investigate the optimal sets of $n$-means and the $n$th quantization errors for all $n\geq 2$.

Proceeding in the similar way as Lemma~\ref{lemma2}, the following lemma can be proved.

\begin{lemma} \label{lemma211} Let $X$ be a real valued random variable with distribution $P$. Let $E(X)$ represent the expected value and $V:=V(X)$ represent the variance of the random variable $X$. Then,
\[E(X)=\psi^4 \text{ and } V(X)=\frac{1}{2} (\psi^2-\psi^4).\]
\end{lemma}

The following note is similar to Note~\ref{note1}.
\begin{note}\label{note11} In this section, we use the same notations as defined in Section~2 just by substituting $s_1=s_2=\frac 13$, $p_1=\psi^2$ and $p_2=\psi^4$. Then, for $\gs,  \gt \in \{1, 2\}^\ast$, we have
\begin{align*}
&a(\gs)=S_\gs(\psi^4)  \te{ and } a(\gs, \gt)=\frac{1}{P(J_\gs\uu J_\gt)} \Big( P(J_\gs) S_\gs(\psi^4)+P(J_\gt) S_\gt(\psi^4)\Big).
\end{align*}
For any $a\in \D R$ and $\gs \in \set{1, 2}^\ast$, we have
\begin{equation} \label{eq11} \int_{J_\gs}(x-a)^2 dP= p_\gs \int (x -a)^2 d(P\circ S_\gs^{-1})=p_\gs \Big(s_\gs^2  V+\Big(S_\gs(\psi^4)-a\Big)^2\Big).\end{equation}
\end{note}

Let us now state the following lemmas and propositions, the technique of the proofs are similar to the similar lemmas and propositions in Section~\ref{sec3}.
\begin{lemma}\label{lemma111}
Let $\ga=\{a_1, a_2\}$ be an optimal set of two-means, $a_1<a_2$. Then, $a_1=a(1)=S_1(\psi^4)$,  $a_2=a(2)=S_2(\psi^4)$, and the quantization error is
$V_2=\frac 19 V.$
\end{lemma}

\begin{prop} \label{prop51}
Let $\ga_n$ be an optimal set of $n$-means for $n\geq 2$. Then, $\ga_n\ii J_1\neq \es$, $\ga_n \ii J_2\neq\es$, and $\ga_n$ does not contain any point from the open interval $(\frac 13, \frac 23)$.
\end{prop}

\begin{prop}\label{prop52}
Let $\ga_n$ be an optimal set of $n$-means for $n\geq 2$. Set $\ga_1:=\ga_n\ii J_1$, $\ga_2:=\ga_n\ii J_2$, and $j:=\te{card}(\ga_1)$. Then, $S_1^{-1}(\ga_1)$ is an optimal set of $j_1$-means and $S_2^{-1}(\ga_2)$ is an optimal set of $(n-j_1)$-means for the probability measure $P$, and
\[V_n=\frac 1{9}(\psi^2 V_{j_1}+\psi^4 V_{n-j_1}\Big).\]
\end{prop}


\begin{lemma}
Let $\ga$ be an optimal set of three-means. Then, $\ga=\set{a(11), a(12), a(2)}$, and the corresponding quantization error is $V_3=\frac 1{9^2}(1+8\psi^4)V$.

\end{lemma}

\begin{remark} For the uniform Cantor distribution considered by Graf-Luschgy in \cite{GL2}, there exist two different optimal sets of three-means. But, for the nonuniform Cantor distribution considered in this section, the optimal set of three-means is unique.
\end{remark}

For $\gs \in \set{1, 2}^\ast$, set $E(\gs):=\int_{J_\gs}(x-a(\gs))^2 dP$. Then, $E(\gs)$ represent the distortion error due to the point $a(\gs)$ in its own Voronoi region. Let $\gl(J_\gs)$ and $c(\gs)$ be defined as in Section~2. Then,
\begin{equation} \label{eq555} E(\gs)=P(J_\gs)\gl(J_\gs)^2 V=\frac 1{9^{|\gs|}} (\psi^2)^{ c(\gs)} (\psi^4)^{|\gs|- c(\gs)} V=\frac 1{9^{|\gs|}} \psi^{ 4|\gs|-2c(\gs)}V.\end{equation}

We now prove the following lemma.
\begin{lemma} \label{lemma456}
Let $\gs, \gt\in \set{1, 2}^\ast$. Then,  $E(\gs1)+E(\gs2)+E(\gt)<E(\gs)+E(\gt1)+E(\gt2)$ if and only if $E(\gs)>E(\gt)$; and   $E(\gs1)+E(\gs2)+E(\gt)=E(\gs)+E(\gt1)+E(\gt2)$ if and only if $E(\gs)=E(\gt)$.
\end{lemma}

\begin{proof} By \eqref{eq555}, we have
$E(\gs1)=\frac 19 \frac 1{9^{|\gs|}} \psi^{ 4|\gs|-2c(\gs)-2}V=\frac 1 9 \frac 1{\psi^2} E(\gs),$
and similarly, $E(\gs2)=\frac 19 E(\gs)$, $E(\gt 1)=\frac 19 \frac 1{\psi^2} E(\gt)$, and $E(\gt2)=\frac 19 E(\gt)$. Thus,
\begin{align*}
(E(\gs1)+E(\gs2)+E(\gt)) -(E(\gs)+E(\gt1)+E(\gt2))=\frac {1-8 \psi^2}{9\psi^2} (E(\gs)-E(\gt)).
\end{align*}
Since $\frac {1-8 \psi^2}{9\psi^2}<0$, the lemma is yielded.
\end{proof}


\begin{figure}
\centerline{\includegraphics[ ]{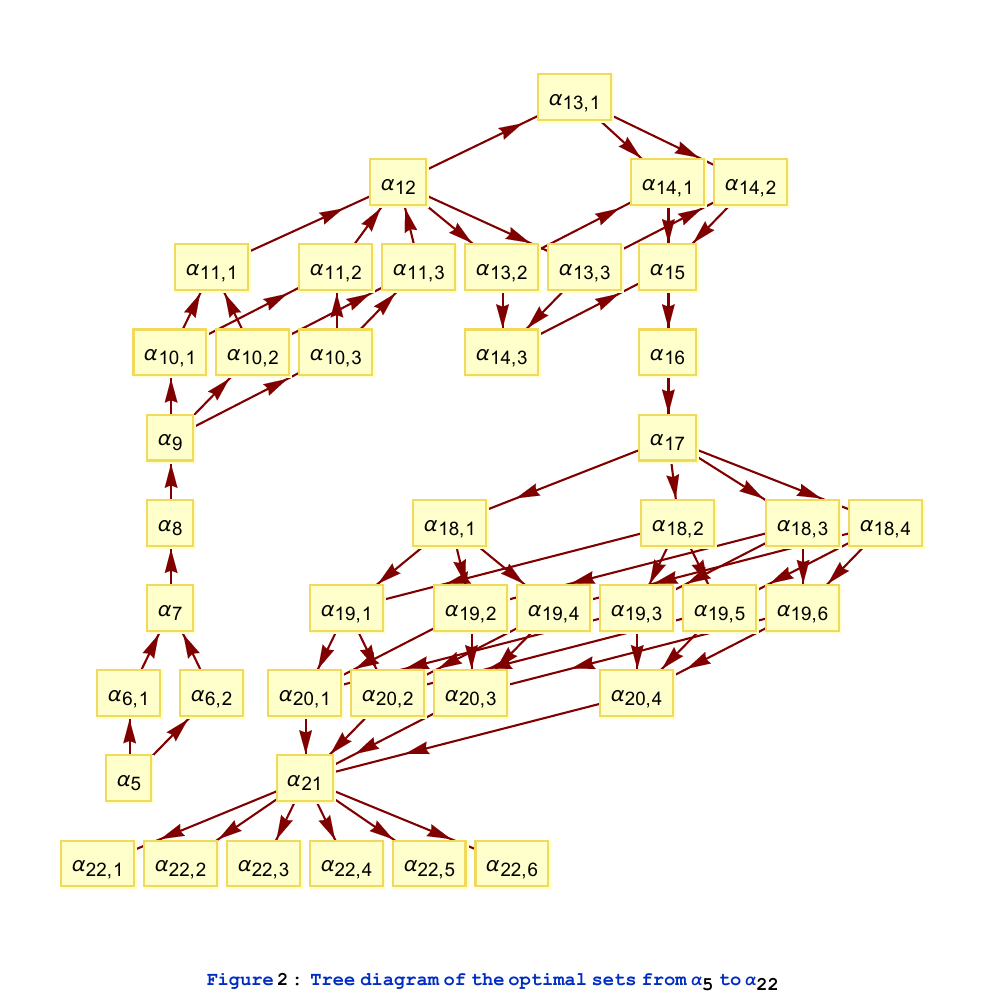}}
\end{figure}

\begin{prop} Theorem~\ref{Th1} gives the induction formula to determine the optimal sets of $n$-means and the $n$th quantization errors for the probability measure $P=\psi^2 P\circ S_1^{-1}+\psi^4 P\circ S_2^{-1}$ for all $n\geq 2$.
\end{prop}

\begin{proof}
Notice that Proposition~\ref{prop31},  and Lemma~\ref{lemma16} are also true for the probability measure $P=\psi^2 P\circ S_1^{-1}+\psi^4 P\circ S_2^{-1}$ considered in this section. Thus, if $\ga_n$ is an optimal set of $n$-means for some $n\geq 2$, then for $a(\gs), a(\gt) \in \ga_n$, by Lemma~\ref{lemma456}, if $E(\gs)>E(\gt)$, we have $E(\gs1)+E(\gs2)+E(\gt)<E(\gs)+E(\gt1)+E(\gt2)$, and if $E(\gs)=E(\gt)$, we have $E(\gs1)+E(\gs2)+E(\gt)=E(\gs)+E(\gt1)+E(\gt2)$. Hence, the induction formula given by Theorem~\ref{Th1} also determines the optimal sets of $n$-means and the $n$th quantization errors for the probability measure $P=\psi^2 P\circ S_1^{-1}+\psi^4 P\circ S_2^{-1}$ for all $n\geq 2$.
\end{proof}
\begin{remark} Using the induction formula, and the similar notations as described in Section~\ref{sec4},  we obtain a tree diagram of the optimal sets of $n$-means for $5\leq n\leq 22$ for the probability distribution $P=\psi^2 P\circ S_1^{-1}+\psi^4 P\circ S_2^{-1}$,  which is given by Figure~2. The induction formula also works for the Cantor distribution considered by Graf-Luschgy (see \cite{GL2}). To obtain the optimal sets of $n$-means for all $n\geq 2$ such an induction formula for any Cantor distribution does not always work. In the next section, we give a counter example in this direction.
\end{remark}

\section{The induction formula does not work for all Cantor distributions} \label{sec6}

In this section, we consider the Cantor distribution $P=\frac 12 P\circ S_1^{-1}+\frac 12 P\circ S_2^{-1}$, which has support the Cantor set generated by the two contractive similarity mappings given by $S_1(x)=\frac 7{16} x$ and $S_2(x)=\frac 7{16}x+\frac 9{16}$ for all $x\in \D R$. We keep the same notations as given in the previous sections. Let $X$ be the random variable with distribution $P$, then we have $E(X)=\frac 12$ and $V:=V(X)=\frac{9}{92}$.
Let us now state the following lemma. The proof is not difficult to see.

\begin{lemma}\label{lemma112}
Let $P=\frac 12 P\circ S_1^{-1}+\frac 12 P\circ S_2^{-1}$. Then, the set $\set{a_1, a_2}$, where $a_1=a(1)=S_1(\frac 12)$ and $a_2=a(2)=S_2(\frac 12)$, forms an optimal set of two-means with quantization error $V_2=\frac{441}{23552}$.
\end{lemma}

Let us now give the following lemma.

\begin{lemma}\label{lemma431} The set $\set{a(11), a(12), a(2)}$ does not form an optimal set of three-means for the probability distribution $P$ given in this section.
\end{lemma}

\begin{proof} Let $V_{3,1}$ be the distortion error due to the set $\set{a(11), a(12), a(2)}$. Then,
\[V_{3,1}= \int_{J_{11}}(x-a(11))^2 dP+\int_{J_{12}}(x-a(12))^2 dP+\int_{J_2}(x-a(2))^2 dP=\frac{134505}{12058624}=0.0111543.\]
We have $a(11, 121, 1221)=E(X : X \in J_{11}\uu J_{121}\uu J_{1221})=\frac{24919}{131072}=0.190117$. Similarly, $a(1222, 21)=\frac{400031}{655360}=0.610399$, and $a(22)=\frac{463}{512}=0.904297$.
Notice that
\begin{align*} S_{1221}(1)=0.390396&<\frac{1}{2}(a(11, 121, 1221)+ a(1222, 21))=0.400258<S_{1222}(0)=0.400864,\\
S_{21}(1)=0.753906&<\frac{1}{2}(a(1222, 21)+a(22))=0.757348<S_{22}(0)=0.808594.
\end{align*}
Thus, $P$-almost surely the set $\set{a(11, 121, 1221), a(1222, 21), a(22)}$ forms a Voronoi partition of the Cantor set $C$ generated by $S_1$ and $S_2$. Moreover, $a(11, 121, 1221)$, $a(1222, 21)$ and $a(22)$ are the expected values of the random variable $X$ in their own Voronoi regions. Let $V_{3, 2}$ be the distortion error due to the set $\set{a(11, 121, 1221), a(1222, 21), a(22)}$ with respect to the probability distribution $P$. Then,
\[V_{3,2}=\int_{J_{11}\uu J_{121}\uu J_{1221}}(x-\frac{24919}{131072})^2 dP+\int_{J_{1222}\uu J_{21}}(x-\frac{400031}{655360})^2 dP+\int_{J_{22}}(x-\frac{463}{512})^2 dP,\]
implying $V_{3,2}=\frac{88046975853}{7902739824640}=0.0111413$. Since $V_{3,2}<V_{3,1}$, the set $\set{a(11), a(12), a(2)}$ does not form an optimal set of three-means, which completes the proof of the lemma.
\end{proof}

We now prove the following proposition, which is the main result in this section.

\begin{prop} \label{prop444} Let $P$ be the cantor distribution considered in this section. Then, the induction formula given by Theorem~\ref{Th1} does not give the optimal sets of $n$-means for all $n\geq 2$.
\end{prop}
\begin{proof} For the sake of contradiction, assume that the induction formula given by Theorem~\ref{Th1} gives the optimal sets of $n$-means for all $n\geq 2$. By Lemma~\ref{lemma112}, the set $\set{a(1), a(2)}$ forms an optimal set of two-means for the Cantor distribution $P$. Using the notations given in Theorem~\ref{Th1}, we have
$O_2(\ga_2)=\hat O_2(\ga_2)=\set{1, 2}$. Take $\gt=1$. Then, by Theorem~\ref{Th1}, the set $\ga_3(\gt)=\set{a(11), a(12), a(2)}$ is an optimal set of three-means for the probability distribution $P$, which by Lemma~\ref{lemma431} gives a contradiction. Hence, the induction formula given by Theorem~\ref{Th1} does not give the optimal sets of $n$-means for all $n\geq 2$.
\end{proof}

\noindent \textbf{Acknowledgements.}
In this paper I would like to express sincere thanks and gratitude to my Master's thesis advisor Professor S.N. Lahiri.


\begin{thebibliography}{9999}
\bibitem[AW]{AW} E.F. Abaya and G.L. Wise, \emph{Some remarks on the existence of optimal quantizers}, Statistics \& Probability Letters, Volume 2, Issue 6, December 1984, Pages 349-351.


\bibitem[BC1]{BC1} J.A. Bucklew and S. Cambanis, \emph{Estimating random integrals from noisy observations: Sampling designs and their performance},  IEEE Trans. Inform. Theory,  vol. 34,  pp.111-127 (1988).

\bibitem[BC2]{BC2} K. Benhenni and S. Cambanis, \emph{The effect of quantization on the performance of sampling designs}, IEEE Trans. Inform. Theory,  vol. 44,  pp.1981-1992 (1998).

\bibitem[CG]{CG}  S. Cambanis and N. Gerr, \emph{A simple class of asymptotically optimal quantizers}, IEEE Trans.
Inform. Theory, 29 (1983), 664-676.

\bibitem[DFG]{DFG} Q. Du, V. Faber and M. Gunzburger, \emph{Centroidal Voronoi Tessellations: Applications and Algorithms}, SIAM Review, Vol. 41, No. 4 (1999), pp. 637-676.

\bibitem[GG]{GG} A. Gersho and R.M. Gray, \emph{Vector quantization and signal compression}, Kluwer Academy publishers: Boston, 1992.

\bibitem[GKL]{GKL}  R.M. Gray, J.C. Kieffer and Y. Linde, \emph{Locally optimal block quantizer design}, Information and Control, 45 (1980), pp. 178-198.


\bibitem[GL]{GL} R.M. Gray and T. Linder, \emph{Mismatch in high rate entropy constrained vector quantization},
IEEE Trans. Inform. Theory, 49 (2003), 1204-1217.


\bibitem[GL1]{GL1} S. Graf and H. Luschgy, \emph{Foundations of quantization for probability distributions}, Lecture Notes in Mathematics 1730, Springer, Berlin, 2000.

\bibitem[GL2]{GL2} S. Graf and H. Luschgy, \emph{The quantization of the Cantor distribution}, Math. Nachr., 183 (1997), 113-133.


\bibitem [GN]{GN} R.M. Gray and D.L. Neuhoff, \emph{Quantization}, IEEE Trans. Inform. Theory, 44 (1998), 2325-2383.

\bibitem[LP]{LP} H. Luschgy and G. Pag\`es, \emph{Functional quantization of Gaussian processes}, Jour. Func. Annal.,
196 (2002), 486-531.
\bibitem[H]{H} J. Hutchinson, \emph{Fractals and self-similarity}, Indiana Univ. J., 30 (1981), 713-747.

\bibitem[K1]{K1} W. Kreitmeier, \emph{Optimal quantization for dyadic homogeneous Cantor distribution}, Math. Nachr., 281, No. 9, 1307-1327 (2008).

\bibitem[K2]{K2} W. Kreitmeier, \emph{Optimal Quantization for Uniform Distributions on Cantor-Like Sets}, Acta Applicandae Mathematicae, March 2009, Volume 105, Issue 3,  pp 339-372.



\bibitem[KZ]{KZ} M. Kesseb\"ohmer and
S. Zhu, \emph{Stability of quantization dimension and quantization for homogeneous Cantor measures}, Math. Nachr., Vol 280, No. 8,  866-881 (2007).


\bibitem[LCG]{LCG} J. Li, N. Chaddha, and R.M. Gray, \emph{Asymptotic performance of vector quantizers with a perceptual
distortion measure}, IEEE Trans. Inform. Theory, 45 (1999), 1082-1091.

\bibitem[PPP]{PPP} G. Pag\`es, H. Pham, and J. Printemps, \emph{Optimal quantization methods and
applications to numerical problems in finance}, in Handbook of Computational and Numerical Methods in Finance, S Rachev, ed., Birkh\"auser Boston, Boston, MA, 2004, 253-297.



\bibitem[S]{S} W.F. Sheppard, \emph{On the calculation of the most probable values of frequency
constants for data arranged according to equidistant divisions of a scale}, Proc. London Math. Soc., Vol. s1-29, No. 1. (1897), 353-380.

\bibitem[SS1]{SS1} M. Shykula and O. Seleznjev, \emph{Stochastic structure of asymptotic quantization errors}, Stat.
Prob. Letters, 76 (2006), 453-464.



\bibitem[SS2]{SS2} M. Shykula and O. Seleznjev, \emph{Uniform quantization of random processes}, Univ. Ume\.a Research
Report, 1 (2004), 1-16.


\bibitem[Z]{Z} P.L. Zador, \emph{Asymptotic quantization error of continuous signals and the quantization dimensions},
IEEE Trans. Inform. Theory, 28 (1982), 139-148.


\end{thebibliography}
\end{document}